\begin{document}
	\title{Evolomino is NP-complete}
	%
	%\titlerunning{Abbreviated paper title}
	% If the paper title is too long for the running head, you can set
	% an abbreviated paper title here
	%
	\author{Andrei V. Nikolaev\orcidID{0000-0003-4705-2409}}
	\authorrunning{A.V. Nikolaev}
	% First names are abbreviated in the running head.
	% If there are more than two authors, 'et al.' is used.
	%
	\institute{P.G. Demidov Yaroslavl State University, Yaroslavl, Russia
		\email{andrei.v.nikolaev@gmail.com}}

	\maketitle              % typeset the header of the contribution
	\begin{abstract}
		Evolomino is a pencil-and-paper logic puzzle popularized by the Japanese publisher Nikoli (like Sudoku, Kakuro, Slitherlink, Masyu, and Fillomino). The puzzle's name reflects its core mechanic: the shapes of polyomino-like blocks that players must draw gradually ``evolve'' in the directions indicated by pre-drawn arrows. We prove, by reduction from \textsc{3-SAT}, that the question of whether there exists at least one solution to an Evolomino puzzle satisfying the rules is NP-complete. Since our reduction is parsimonious, i.e., it preserves the number of distinct solutions, we also prove that counting the number of solutions to an Evolomino puzzle is $\texttt{\#}$P-complete.
		
		\keywords{Computational complexity  \and NP-complete \and $\texttt{\#}$P-complete \and Evolomino \and pencil-and-paper logic puzzle \and polynomial-time reduction \and \textsc{3-SAT} \and parsimonious reduction.}
	\end{abstract}
	\section{Introduction}
	
	Nikoli is a renowned Japanese publisher specializing in games and, especially, logic puzzles. Established in 1980, it became prominent worldwide with the popularity of Sudoku.
	The magazine published under the same name by Nikoli was the first puzzle magazine in Japan, and over the years, many new types of puzzles appeared on its pages.
	
	In this paper, we consider a new pencil-and-paper logic puzzle Evolomino that was introduced in the book \textit{768 The Pencil Puzzles 2025} published by Nikoli in 2025~\cite{NikoliPuzzleBook}.
	
	Here we present the rules of the Evolomino puzzle as they appear on Nicoli's official website~\cite{Evolomino_Nikoli_web_page}:
	
	\begin{itemize}
		\item Evolomino is played on a rectangular board with white and shaded cells. Some white cells contain pre-drawn squares and arrows.
		
		\item The player must draw squares ($\square$) in some of the white cells.
		
		\item A polyomino-like group of squares connected vertically and horizontally is called a \textit{block} (including only one square). Each block must contain exactly one square placed on a pre-drawn arrow.
		
		\item Each arrow must pass through at least two blocks.
		
		\item The second and later blocks on the route of an arrow from start to finish must progress by adding one square to the previous block without rotating or flipping.
	\end{itemize}
	
	An example of the Evolomino puzzle and its solution from Nicoli's official website~\cite{Evolomino_Nikoli_web_page} is shown in Fig.~\ref{Fig_Evolomino_example}.
	
	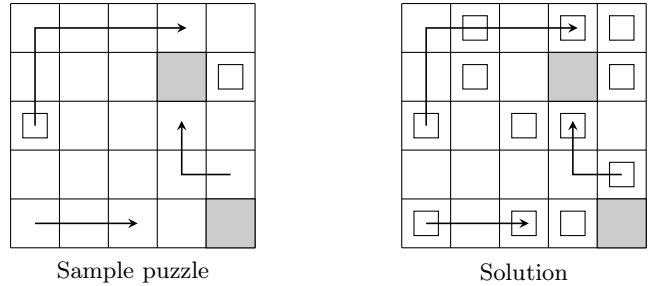
\begin{figure}[t]
		\centering
		\begin{tikzpicture}[scale=0.65]
			
			\foreach \x in {0,...,5}
			\draw (\x+0.5,0.5)--(\x+0.5,5.5);
			
			\foreach \y in {0,...,5}
			\draw (0.5,\y+0.5)--(5.5,\y+0.5);

			\foreach \x in {5}	
			\foreach \y in {1}
			\draw [fill = MidnightBlue!30] (\x-0.5,\y-0.5) -- (\x+0.5,\y-0.5) -- (\x+0.5,\y+0.5) -- (\x-0.5,\y+0.5) -- cycle;	
			
			\foreach \x in {4}	
			\foreach \y in {4}
			\draw [fill = MidnightBlue!30] (\x-0.5,\y-0.5) -- (\x+0.5,\y-0.5) -- (\x+0.5,\y+0.5) -- (\x-0.5,\y+0.5) -- cycle;

			\draw [semithick,->,>=stealth] (1,1) -- (3+0.1,1);
			\draw [semithick,->,>=stealth] (1,3) -- (1,5) -- (4+0.1,5);
			\draw [semithick,->,>=stealth] (5,2) -- (4,2) -- (4,3+0.1);
			
			\foreach \x in {1}
			\foreach \y in {3}
			\draw (\x-0.25,\y-0.25) -- (\x+0.25,\y-0.25) -- (\x+0.25,\y+0.25) -- (\x-0.25,\y+0.25) -- cycle;

			\foreach \x in {5}
			\foreach \y in {4}
			\draw (\x-0.25,\y-0.25) -- (\x+0.25,\y-0.25) -- (\x+0.25,\y+0.25) -- (\x-0.25,\y+0.25) -- cycle;
			
			\node at (3,0) {Sample puzzle};
			
			\begin{scope}[xshift=8cm]
				
				\foreach \x in {0,...,5}
				\draw (\x+0.5,0.5)--(\x+0.5,5.5);
				
				\foreach \y in {0,...,5}
				\draw (0.5,\y+0.5)--(5.5,\y+0.5);

				\foreach \x in {5}	
				\foreach \y in {1}
				\draw [fill = MidnightBlue!30] (\x-0.5,\y-0.5) -- (\x+0.5,\y-0.5) -- (\x+0.5,\y+0.5) -- (\x-0.5,\y+0.5) -- cycle;	
				
				\foreach \x in {4}	
				\foreach \y in {4}
				\draw [fill = MidnightBlue!30] (\x-0.5,\y-0.5) -- (\x+0.5,\y-0.5) -- (\x+0.5,\y+0.5) -- (\x-0.5,\y+0.5) -- cycle;

				\draw [semithick,->,>=stealth] (1,1) -- (3+0.1,1);
				\draw [semithick,->,>=stealth] (1,3) -- (1,5) -- (4+0.1,5);
				\draw [semithick,->,>=stealth] (5,2) -- (4,2) -- (4,3+0.1);
				
				\foreach \x in {1}
				\foreach \y in {1,3}
				\draw (\x-0.25,\y-0.25) -- (\x+0.25,\y-0.25) -- (\x+0.25,\y+0.25) -- (\x-0.25,\y+0.25) -- cycle;
				
				\foreach \x in {2}
				\foreach \y in {4,5}
				\draw (\x-0.25,\y-0.25) -- (\x+0.25,\y-0.25) -- (\x+0.25,\y+0.25) -- (\x-0.25,\y+0.25) -- cycle;
				
				\foreach \x in {3}
				\foreach \y in {1,3}
				\draw (\x-0.25,\y-0.25) -- (\x+0.25,\y-0.25) -- (\x+0.25,\y+0.25) -- (\x-0.25,\y+0.25) -- cycle;
				
				\foreach \x in {4}
				\foreach \y in {1,3,5}
				\draw (\x-0.25,\y-0.25) -- (\x+0.25,\y-0.25) -- (\x+0.25,\y+0.25) -- (\x-0.25,\y+0.25) -- cycle;
				
				\foreach \x in {5}
				\foreach \y in {2,4,5}
				\draw (\x-0.25,\y-0.25) -- (\x+0.25,\y-0.25) -- (\x+0.25,\y+0.25) -- (\x-0.25,\y+0.25) -- cycle;
				
				\node at (3,0) {Solution};
			\end{scope}
		\end{tikzpicture}
		\caption {Example of an \textsc{Evolomino} puzzle}
		\label {Fig_Evolomino_example}
	\end{figure}

	The study of the computational complexity of games and puzzles is a rapidly growing area within theoretical computer science.
	This interest is driven by several factors. 
	On the one hand, from an academic and educational perspective, games and puzzles offer engaging and accessible examples of combinatorial problems, making them valuable tools for teaching computational complexity theory and proof techniques.
	On the other hand, understanding the complexity of a problem aids in designing efficient algorithms for solving puzzles and in developing winning strategies for combinatorial games involving two or more players.
	
	Many classic games are known to be computationally intractable (assuming P $\neq$ NP): one-player puzzles are often NP-complete (as Minesweeper~\cite{Kaye2000}, Plumber~\cite{Kral2004}, and Tetris~\cite{Breukelaar2004}) or PSPACE-complete (as Rush Hour~\cite{Flake2002}), and two-player games are often PSPACE-complete (as Hex~\cite{Reisch1981} and Reversi~\cite{Iwata1994}) or EXPTIME-complete (as Chess~\cite{Frankel1981} and Go~\cite{Robson1983}).
	
	In particular, many pencil-and-paper logic puzzles introduced or popularized by Nikoli are NP-complete: Sudoku~\cite{Yato2003}, Kakuro~\cite{Ruepp2010,Seta2002}, Hashiwokakero~\cite{Anderson2009}, Numberlink~\cite{Kotsuma2010}, Shikaku and Ripple effect~\cite{Yasuhiko2013}, Shakashaka~\cite{Demaine2014}, Tatamibari~\cite{Adler2020}, and many others. 
	For more results on the computational complexity of games and puzzles, see the surveys by Demaine and Hearn~\cite{Demaine2009} and Kendall et al.~\cite{Kendall2008}.
	
	In this paper, we prove that determining whether an Evolomino puzzle has at least one valid solution is NP-complete, via a reduction from \textsc{3-SAT}.
	Furthermore, because our reduction preserves the number of solutions, we also establish that counting the total number of solutions to an Evolomino puzzle is $\texttt{\#}$P-complete.

	\section{Problem statement}
	
	Let us begin with a formal definition of the Evolomino puzzle.
	
	\vspace{2mm}
	
	\textbf{\textsc{Evolomino.}}
	
	\textsc{Instance.} Given a rectangular board of size $p \times q$, where each cell is either white or shaded. Some of the white cells contain pre-drawn squares ($\square$). The board also includes a set of pre-drawn arrows. Each arrow starts at the center of one white cell and ends at the center of another, traversing through the centers of intermediate white cells horizontally, vertically or with 90 degree turns. No two arrows may pass through the same cell.
	
	\textsc{Question.} Does there exist at least one solution to the puzzle, i.e., a map from the set of white cells to the set $\{\emptyset, \square\}$ that satisfies the rules of the puzzle:
	\begin{itemize}
		\item each block contains exactly one square placed on the arrow; \item each arrow passes through at least two blocks;
		\item each subsequent block in the direction of the arrow adds one square to the previous block without rotating or flipping?
	\end{itemize}
	
	\vspace{2mm}
	
	We also consider a problem of \textsc{Counting Evolomino}, which has the same instance but asks how many distinct solutions the puzzle has.

	\section{Evolomino is in NP}
	
	\begin{lemma}
		\textsc{Evolomino} $\in$ NP. \label{Lemma_Evolomino_in_NP}
	\end{lemma}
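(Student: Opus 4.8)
The plan is to exhibit, for any yes-instance, a polynomial-size certificate together with a deterministic polynomial-time verification procedure; this is the standard guess-and-check characterization of NP. The natural certificate is the solution itself, i.e., the map $f$ from the set of white cells to $\{\emptyset, \square\}$ recording in which cells the player has drawn a square. Since the board has $p \times q$ cells, such a map is describable in $O(pq)$ bits, which is polynomial in the input size. It therefore remains to give an efficient algorithm that, given an instance and a candidate map $f$, decides whether $f$ satisfies all three rules of the puzzle.

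First I would reconstruct the blocks. Treating the drawn squares as vertices and joining two squares that are horizontally or vertically adjacent, I compute the connected components (for instance by breadth-first search or a union-find structure); each component is a block, and this takes time polynomial in $pq$. Next, using the pre-drawn arrows supplied by the instance, I mark every cell lying on some arrow. For the first rule I scan each block and count how many of its drawn squares are marked; the map passes this test if and only if every block contains exactly one arrow square. Note that because no two arrows share a cell, each block's unique arrow square lies on exactly one arrow, so every block is unambiguously associated with a single arrow.

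To check the remaining two rules I process each arrow separately. Tracing the arrow's route from start to finish, I record, in the order they are encountered, the blocks whose arrow square lies on this route; this yields an ordered list $B_1, B_2, \dots, B_m$ of the blocks the arrow passes through. The second rule is verified by testing $m \ge 2$. For the third rule I examine each consecutive pair $(B_i, B_{i+1})$ and test whether $B_{i+1}$ is obtained from $B_i$ by adding a single square without rotation or reflection: concretely, whether there exists a translation vector $t$ with $(B_i + t) \subseteq B_{i+1}$ and $|B_{i+1}| = |B_i| + 1$. Since a block is already connected by construction, no separate connectivity check is needed; and since the admissible translations are bounded by the board dimensions while each containment test is linear in the block size, the whole check is polynomial. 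If every arrow passes every test the verifier accepts, and otherwise it rejects.

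The hard part will not be computational but definitional: pinning down the correct ordering of the blocks along an arrow and formalizing the ``progress by adding one square without rotating or flipping'' condition as the translation-containment test above. Once these formalizations are made precise, correctness of the verifier and its polynomial running time follow routinely, establishing \textsc{Evolomino} $\in$ NP.
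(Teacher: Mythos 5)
Your proposal is correct and follows essentially the same route as the paper: take the solution map itself as the polynomial-size certificate, reconstruct the blocks, and verify the three rules by a polynomial-time scan of blocks and arrows. The only cosmetic difference is in checking the growth rule, where you test translation-containment with $|B_{i+1}| = |B_i| + 1$ while the paper deletes each square of the larger block and compares shapes; these are equivalent and both polynomial.
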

	
	\begin{proof}
		\textsc{Evolomino} is a decision problem, so to prove that it belongs to the class NP, it is sufficient to show that verifying whether a given solution satisfies all the rules of the puzzle is performed in polynomial time.
		
		Let's consider an \textsc{Evolomino} puzzle on a rectangular board of size $p \times q$. If we guess some solution to the puzzle, then checking its correctness will require:
		
		\begin{itemize}
			\item Rule: ``Each block contains exactly one square placed on the arrow.'' 
			
			Since each board cell can belong to no more than one block, we have at most $O(pq)$ blocks. The time required to traverse a single block is linear in its size, which is bounded above by $p \times q$, the size of the entire board. Therefore, the overall time complexity for verifying this rule is $O(p^2 q^2)$.
			
			\item Rule: ``Each arrow must pass through at least two blocks''.
			
			Verifying this rule requires a single pass over each pre-drawn arrow. Since the total length of all arrows is bounded by the number of cells on the board, the overall time complexity for this check is $O(pq)$.
			
			\item Rule: ``The second and later blocks on the route of an arrow from start to finish must progress by adding one square to the previous block without rotating or flipping''.
			
			We consider a pair of consecutive blocks along the direction of an arrow, both of size at most $O(pq)$. For each such pair, we can attempt to exclude each square of the larger block one by one, then check whether the resulting shape matches the smaller block. This requires traversing both blocks and comparing their structures, resulting in a worst-case complexity of $O(p^2 q^2)$ for a pair of two consecutive blocks.
			
			Since the number of consecutive block pairs is bounded by the total length of all pre-drawn arrows, i.e., $O(pq)$, the overall complexity of verifying this rule is $O(p^3 q^3)$.
		\end{itemize}
		
		Summing the complexities of verifying all the rules, we conclude that a given solution to the puzzle can be verified in polynomial time $O(p^3 q^3)$. Note that this is a fairly rough upper bound, but to prove that \textsc{Evolomino} $\in$ NP, it is sufficient to be polynomial in the size of the board $p \times q$. \qed
	\end{proof}

	\section{Evolomino is NP-hard}
	
	\begin{theorem}
		\textsc{Evolomino} is NP-hard.
	\end{theorem}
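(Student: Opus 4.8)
The plan is to reduce from \textsc{3-SAT}: given a Boolean formula $\varphi$ in conjunctive normal form with three literals per clause, I would construct in polynomial time an \textsc{Evolomino} board $B_\varphi$ whose solutions are in one-to-one correspondence with the satisfying assignments of $\varphi$. The central observation driving the encoding is the mechanic already visible in Fig.~\ref{Fig_Evolomino_example}: along a single arrow the blocks form a strictly increasing chain of polyominoes, each containing the shape of its predecessor as a translated (never rotated or reflected) sub-polyomino extended by exactly one cell. The only freedom in such a chain is \emph{where} each successive cell is attached and \emph{where} each block is anchored on the board. By surrounding an arrow with shaded cells, pre-drawn squares, and board boundaries, I can prune this freedom down to a single binary choice; and because a bump chosen in the accumulated shape must persist in every downstream block on the same arrow, that bit is carried, read, and checked along the arrow. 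This is exactly the signal-propagation behaviour needed for a gadget reduction.

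Concretely, I would build the following gadgets. A \textbf{variable gadget} hosts an arrow whose second block may evolve in one of precisely two admissible directions (all others blocked by shading or boundary), the two choices encoding the literal values \emph{true} and \emph{false}. A \textbf{wire gadget}, together with \textbf{corner} and \textbf{crossover} gadgets, routes this bit across the board: the evolution constraint forces the shape feature encoding the bit to be copied faithfully from block to block, so a wire transmits the value unchanged, a corner turns it, and a crossover lets two independent wires pass without interaction. A \textbf{fan-out gadget} duplicates a variable's value so that the same variable can feed several clauses, and a \textbf{negation gadget} swaps the two shape features so that a wire may deliver either a literal or its complement. Each gadget will be designed so that, given its incoming signal(s), its internal configuration is \emph{forced}; this is what makes the reduction parsimonious.

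The crux is the \textbf{clause gadget} for $(\ell_1 \lor \ell_2 \lor \ell_3)$. Here three literal-wires meet at an arrow whose final evolution step can be completed if and only if at least one incoming wire carries \emph{true}: a true literal supplies the one cell (or frees the one position) that the growing block needs in order to legally add its next square, whereas if all three literals are \emph{false} the block has no admissible extension, so the arrow fails to pass through the required second block and the whole board becomes unsolvable. Thus $B_\varphi$ admits a solution exactly when some assignment satisfies every clause, and by the forced behaviour of each gadget, distinct satisfying assignments yield distinct solutions and vice versa.

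I expect the main obstacle to be ruling out \emph{spurious} solutions. Because the evolution rule constrains only the shapes of consecutive blocks and places no a priori bound on those shapes, a naive gadget may admit unintended block anchorings or unexpected polyomino shapes that break the intended semantics. The delicate part of the argument is therefore the case analysis showing that, inside each gadget, the shading, pre-drawn squares, boundaries, and the mandatory ``at least two blocks'' and ``exactly one square on the arrow'' rules leave exactly the intended degrees of freedom --- no more and no fewer --- and that the gadgets remain non-interfering when tiled together. Once this is established, the board $B_\varphi$ has size polynomial in the number of variables and clauses and is constructible in polynomial time, which proves NP-hardness; combined with Lemma~\ref{Lemma_Evolomino_in_NP} this yields NP-completeness, and the parsimony of the reduction yields $\texttt{\#}$P-completeness of \textsc{Counting Evolomino} by reduction from $\texttt{\#}$\textsc{3-SAT}.
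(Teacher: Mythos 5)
Your overall strategy coincides with the paper's: a gadget-based, parsimonious reduction from \textsc{3-SAT} with variable, wire, negation, fan-out/split, crossover, and clause gadgets, followed by a polynomial size bound. But as it stands the proposal is a plan rather than a proof: not a single gadget is actually constructed, and the step you yourself identify as the delicate one --- the case analysis showing that shading, pre-drawn squares, and the puzzle rules leave \emph{exactly} the intended degrees of freedom in each gadget and that gadgets do not interfere when composed --- is exactly where the entire content of the argument lies. Claims such as ``the crossover lets two independent wires pass without interaction'' or ``a true literal supplies the one cell the growing block needs'' are precisely the assertions that must be certified by exhibiting explicit boards and enumerating their feasible fillings; without them, NP-hardness (and a fortiori parsimony) is not established.

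Your proposed encoding also differs from the paper's in a way that matters for feasibility. You want a ``bump'' in the accumulated shape to persist through many blocks along one arrow and to be read downstream; the paper instead fences every arrow with shaded cells so that it carries exactly two blocks, and encodes the truth value in block \emph{size} (blocks of 1 and 2 squares for true, 3 and 4 for false), with signals passed between gadgets by how far the terminal block fills a column. Negation is a shared intermediate square that must join one of two adjacent variable gadgets; the clause gadget places a pre-drawn vertical line of 5 squares at the end of a horizontal arrow, forcing a vertical 4-line into one of the three literal columns, and a false (long) incoming block locks its column because a block may contain only one square on an arrow; the crossover uses a $\mathrm{\Gamma}$-shaped block together with the no-rotation rule to swap two wires without mixing their values. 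Until you supply concrete gadgets of this kind and verify their solution sets (two solutions for a variable, four for a crossover, solvable clause iff some literal is true), the reduction --- and the claimed bijection with satisfying assignments --- remains unproven.
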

	
	\begin{proof}
		
		To prove that \textsc{Evolomino} is NP-hard, we construct a polynomial-time reduction from NP-complete \textsc{3-SAT} problem~\cite{Garey1979,Karp1972} to \textsc{Evolomino}.
		
		\vspace{2mm}
		
		\textbf{\textsc{3-SAT.}}
		
		\textsc{Instance.} Given a Boolean formula in conjunctive normal form (CNF), i.e., a collection $C = \{C_1,\ldots,C_m\}$ of clauses on a finite set $X=\{x_1,\ldots,x_n\}$ of Boolean variables such that $|C_i| = 3$ for all $1 \leq i \leq m$.
		
		\textsc{Question.} Is there a truth assignment for variables $X$ that satisfies all the clauses in $C$, and therefore satisfies the Boolean formula?	
		
		\vspace{2mm}
		
		We model each component of \textsc{3-SAT} with some gadgets on the \textsc{Evolomino} board so that the puzzle has a solution if and only if the corresponding instance of \textsc{3-SAT} is satisfiable.

		\subsection*{Variable and wire gadget}
		
		The variable and wire gadget models the truth assignment of a Boolean variable. Its structure is shown in Fig.~\ref{Fig_variable_wire_gadget}.
		
		Since each block must contain exactly one square placed on an arrow, and each subsequent block along an arrow must progress by adding one square to the previous block without rotating or flipping, the gadget admits only two feasible solutions: two vertical blocks of 1 and 2 squares representing $x = 1$ (Fig.~\ref{Fig_variable_wire_gadget} (a)), and two vertical blocks of 3 and 4 squares representing $x = 0$ (Fig.~\ref{Fig_variable_wire_gadget} (b)).
		
		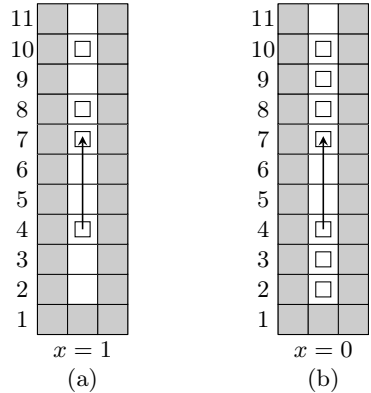
\begin{figure}[t]
			\centering
			\begin{tikzpicture}[scale=0.40]
				
				%	\foreach \y in {1,...,11}
				%	\node at (0,\y) {\y};
				
				\node at (2,0) {$x=1$};
				
				\node at (2,-1) {(a)};
				
				\foreach \x in {0,...,3}
				\draw (\x+0.5,0.5)--(\x+0.5,11.5);
				
				\foreach \y in {0,...,11}
				\draw (0.5,\y+0.5)--(3.5,\y+0.5);
				
				\foreach \x in {1,3}	
				\foreach \y in {1,...,11}
				\draw [fill = MidnightBlue!30] (\x-0.5,\y-0.5) -- (\x+0.5,\y-0.5) -- (\x+0.5,\y+0.5) -- (\x-0.5,\y+0.5) -- cycle;	
				
				\foreach \x in {2}	
				\foreach \y in {1}
				\draw [fill = MidnightBlue!30] (\x-0.5,\y-0.5) -- (\x+0.5,\y-0.5) -- (\x+0.5,\y+0.5) -- (\x-0.5,\y+0.5) -- cycle;	
				
				\foreach \x in {2}
				\foreach \y in {4,7,8,10}
				\draw (\x-0.25,\y-0.25) -- (\x+0.25,\y-0.25) -- (\x+0.25,\y+0.25) -- (\x-0.25,\y+0.25) -- cycle;
				
				\draw [semithick,->,>=stealth] (2,4) -- (2,7+0.1);

				\begin{scope}[xshift=8cm]
					
					%		\foreach \y in {1,...,11}
					%		\node at (0,\y) {\y};
					
					\node at (2,0) {$x=0$};
					
					\node at (2,-1) {(b)};
					
					\foreach \x in {0,...,3}
					\draw (\x+0.5,0.5)--(\x+0.5,11.5);
					
					\foreach \y in {0,...,11}
					\draw (0.5,\y+0.5)--(3.5,\y+0.5);
					
					\foreach \x in {1,3}	
					\foreach \y in {1,...,11}
					\draw [fill = MidnightBlue!30] (\x-0.5,\y-0.5) -- (\x+0.5,\y-0.5) -- (\x+0.5,\y+0.5) -- (\x-0.5,\y+0.5) -- cycle;	
					
					\foreach \x in {2}	
					\foreach \y in {1}
					\draw [fill = MidnightBlue!30] (\x-0.5,\y-0.5) -- (\x+0.5,\y-0.5) -- (\x+0.5,\y+0.5) -- (\x-0.5,\y+0.5) -- cycle;	
					
					\foreach \x in {2}
					\foreach \y in {2,3,4,7,8,9,10}
					\draw (\x-0.25,\y-0.25) -- (\x+0.25,\y-0.25) -- (\x+0.25,\y+0.25) -- (\x-0.25,\y+0.25) -- cycle;
					
					\draw [semithick,->,>=stealth] (2,4) -- (2,7+0.1);
				\end{scope}
			\end{tikzpicture}
			\caption {Variable and wire gadget}
			\label {Fig_variable_wire_gadget}
		\end{figure}

		Encoding the truth values so that 1 square corresponds to 1 and 3 squares to 0 may seem somewhat counter-intuitive at first, but it will be explained further in the clause gadget. In this framework, we interpret 1 as an ``open lock'' and 0 as a ``closed lock''.
		
		The arrow itself functions as a wire gadget, transmitting the signal from start to finish. To prevent interference from other blocks, we fence each arrow on both sides by shaded cells. Consequently, each arrow has space for only two blocks: one at the beginning and one at the end of the arrow.
		
		\subsection*{Negation gadget}
		
		The negation gadget, which inverts the value of a Boolean variable, is shown in Fig.~\ref{Fig_negation_gadget} (a).
		It consists of two variable gadgets representing $x$ and $\bar{x}$ with a single square positioned between them.
		Since each block must contain exactly one square placed on an arrow,
		this intermediate square will belong either to the $x$-block (Fig.~\ref{Fig_negation_gadget} (b)), or to the $\bar{x}$-block (Fig.~\ref{Fig_negation_gadget} (c)) in any valid solution, thereby enforcing the inversion of the signal.
		
		\begin{figure}[p]
			\centering
			\begin{tikzpicture}[scale=0.40]
				
				%	\foreach \y in {1,...,8}
				%		\node at (0,\y) {\y};
				
				\foreach \x in {0,...,3}
				\draw (\x+0.5,0.5)--(\x+0.5,8.5);
				
				\foreach \y in {0,...,8}
				\draw (0.5,\y+0.5)--(3.5,\y+0.5);
				
				\foreach \x in {1,3}	
				\foreach \y in {1,...,8}
				\draw [fill = MidnightBlue!30] (\x-0.5,\y-0.5) -- (\x+0.5,\y-0.5) -- (\x+0.5,\y+0.5) -- (\x-0.5,\y+0.5) -- cycle;

				\foreach \x in {2}
				\foreach \y in {2,3,5,7}
				\draw (\x-0.25,\y-0.25) -- (\x+0.25,\y-0.25) -- (\x+0.25,\y+0.25) -- (\x-0.25,\y+0.25) -- cycle;
				
				\draw [semithick,->,>=stealth] (2,0.5) -- (2,2+0.1);
				\draw [semithick,->,>=stealth] (2,7) -- (2,8+0.75);
				
				\node at (2,0) {$x$};
				\node at (2,9.15) {$\bar{x}$};
				
				\node at (2,-1) {(a)};
				
				\begin{scope}[xshift=8cm]
					
					%		\foreach \y in {1,...,8}
					%		\node at (0,\y) {\y};
					
					\foreach \x in {0,...,3}
					\draw (\x+0.5,0.5)--(\x+0.5,8.5);
					
					\foreach \y in {0,...,8}
					\draw (0.5,\y+0.5)--(3.5,\y+0.5);
					
					\foreach \x in {1,3}	
					\foreach \y in {1,...,8}
					\draw [fill = MidnightBlue!30] (\x-0.5,\y-0.5) -- 	(\x+0.5,\y-0.5) -- (\x+0.5,\y+0.5) -- (\x-0.5,\y+0.5) -- cycle;

					\foreach \x in {2}
					\foreach \y in {2,3,5,6,7}
					\draw (\x-0.25,\y-0.25) -- (\x+0.25,\y-0.25) -- 	(\x+0.25,\y+0.25) -- (\x-0.25,\y+0.25) -- cycle;
					
					\draw [semithick,->,>=stealth] (2,0.5) -- (2,2+0.1);
					\draw [semithick,->,>=stealth] (2,7) -- (2,8+0.75);
					
					\node at (2,0) {\bf{1}};	
					\node at (2,9.15) {\bf{0}};
					
					\node at (2,-1) {(b)};
					
				\end{scope}
				
				\begin{scope}[xshift=16cm]
					
					%		\foreach \y in {1,...,8}
					%		\node at (0,\y) {\y};
					
					\foreach \x in {0,...,3}
					\draw (\x+0.5,0.5)--(\x+0.5,8.5);
					
					\foreach \y in {0,...,8}
					\draw (0.5,\y+0.5)--(3.5,\y+0.5);
					
					\foreach \x in {1,3}	
					\foreach \y in {1,...,8}
					\draw [fill = MidnightBlue!30] (\x-0.5,\y-0.5) -- (\x+0.5,\y-0.5) -- (\x+0.5,\y+0.5) -- (\x-0.5,\y+0.5) -- cycle;

					\foreach \x in {2}
					\foreach \y in {2,3,4,5,7}
					\draw (\x-0.25,\y-0.25) -- (\x+0.25,\y-0.25) -- (\x+0.25,\y+0.25) -- (\x-0.25,\y+0.25) -- cycle;
					
					\draw [semithick,->,>=stealth] (2,0.5) -- (2,2+0.1);
					\draw [semithick,->,>=stealth] (2,7) -- (2,8+0.75);
					
					\node at (2,0) {\bf{0}};	
					\node at (2,9.15) {\bf{1}};
					
					\node at (2,-1) {(c)};
					
				\end{scope}
			\end{tikzpicture}
			\caption {Negation gadget}
			\label {Fig_negation_gadget}
		\end{figure}
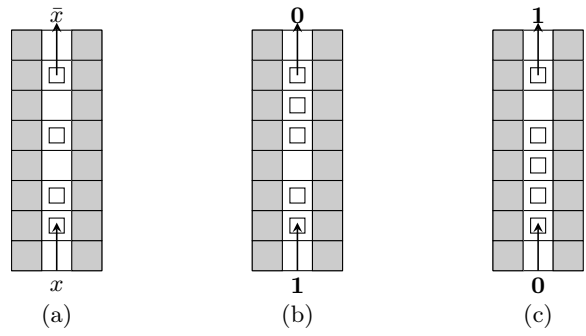

		\subsection*{Split gadget}
		
		The split gadget shown in Fig.~\ref{Fig_split_gadget} (a) is used to duplicate a signal when a single variable appears in multiple clauses.
		As before, since each block contains exactly one square placed on an arrow and the block shape is preserved in the direction of the arrow without rotating and flipping, the puzzle admits only two feasible solutions: Fig.~\ref{Fig_split_gadget} (b) for splitting the $x=1$ signal and Fig.~\ref{Fig_split_gadget} (c) for splitting $x=0$.
		
		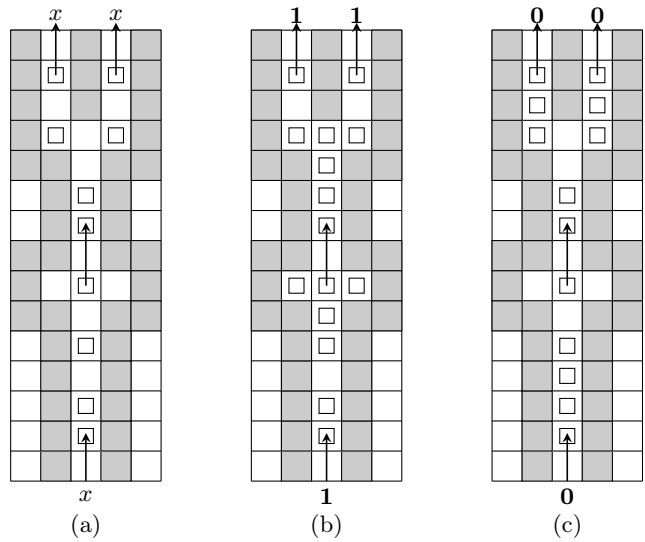
\begin{figure}[p]
			\centering
			\begin{tikzpicture}[scale=0.40]
				
				%	\foreach \y in {1,...,15}
				%		\node at (0,\y) {\y};
				
				\node at (3,-1) {(a)};
				
				\node at (3,0) {$x$};
				\node at (2,16.15) {$x$};
				\node at (4,16.15) {$x$};
				
				\foreach \x in {0,...,5}
				\draw (\x+0.5,0.5)--(\x+0.5,15.5);
				
				\foreach \y in {0,...,15}
				\draw (0.5,\y+0.5)--(5.5,\y+0.5);
				
				\foreach \x in {2,4}	
				\foreach \y in {1,...,6,8,9,10,11}
				\draw [fill = MidnightBlue!30] (\x-0.5,\y-0.5) -- (\x+0.5,\y-0.5) -- (\x+0.5,\y+0.5) -- (\x-0.5,\y+0.5) -- cycle;	
				
				\foreach \x in {1,5}	
				\foreach \y in {6,7,8,11,12,...,15}
				\draw [fill = MidnightBlue!30] (\x-0.5,\y-0.5) -- (\x+0.5,\y-0.5) -- (\x+0.5,\y+0.5) -- (\x-0.5,\y+0.5) -- cycle;	
				
				\foreach \x in {3}	
				\foreach \y in {13,...,15}
				\draw [fill = MidnightBlue!30] (\x-0.5,\y-0.5) -- (\x+0.5,\y-0.5) -- (\x+0.5,\y+0.5) -- (\x-0.5,\y+0.5) -- cycle;	
				
				\foreach \x in {3}
				\foreach \y in {2,3,5,7,9,10}
				\draw (\x-0.25,\y-0.25) -- (\x+0.25,\y-0.25) -- (\x+0.25,\y+0.25) -- (\x-0.25,\y+0.25) -- cycle;
				
				\foreach \x in {2,4}
				\foreach \y in {12,14}
				\draw (\x-0.25,\y-0.25) -- (\x+0.25,\y-0.25) -- (\x+0.25,\y+0.25) -- (\x-0.25,\y+0.25) -- cycle;
				
				\draw [semithick,->,>=stealth] (3,0.5) -- (3,2+0.1);
				\draw [semithick,->,>=stealth] (3,7) -- (3,9+0.1);
				\draw [semithick,->,>=stealth] (2,14) -- (2,15+0.75);
				\draw [semithick,->,>=stealth] (4,14) -- (4,15+0.75);

				\begin{scope}[xshift=8cm]
					%		\foreach \y in {1,...,15}
					%			\node at (0,\y) {\y};
					
					\node at (3,-1) {(b)};
					
					\node at (3,0) {\bf{1}};
					\node at (2,16.15) {\bf{1}};
					\node at (4,16.15) {\bf{1}};
					
					\foreach \x in {0,...,5}
					\draw (\x+0.5,0.5)--(\x+0.5,15.5);
					
					\foreach \y in {0,...,15}
					\draw (0.5,\y+0.5)--(5.5,\y+0.5);
					
					\foreach \x in {2,4}	
					\foreach \y in {1,...,6,8,9,10,11}
					\draw [fill = MidnightBlue!30] (\x-0.5,\y-0.5) -- (\x+0.5,\y-0.5) -- (\x+0.5,\y+0.5) -- (\x-0.5,\y+0.5) -- cycle;	
					
					\foreach \x in {1,5}	
					\foreach \y in {6,7,8,11,12,...,15}
					\draw [fill = MidnightBlue!30] (\x-0.5,\y-0.5) -- (\x+0.5,\y-0.5) -- (\x+0.5,\y+0.5) -- (\x-0.5,\y+0.5) -- cycle;	
					
					\foreach \x in {3}	
					\foreach \y in {13,...,15}
					\draw [fill = MidnightBlue!30] (\x-0.5,\y-0.5) -- (\x+0.5,\y-0.5) -- (\x+0.5,\y+0.5) -- (\x-0.5,\y+0.5) -- cycle;	
					
					\foreach \x in {3}
					\foreach \y in {2,3,5,6,7,9,10,11,12}
					\draw (\x-0.25,\y-0.25) -- (\x+0.25,\y-0.25) -- (\x+0.25,\y+0.25) -- (\x-0.25,\y+0.25) -- cycle;

					\foreach \x in {2,4}
					\foreach \y in {7,12,14}
					\draw (\x-0.25,\y-0.25) -- (\x+0.25,\y-0.25) -- (\x+0.25,\y+0.25) -- (\x-0.25,\y+0.25) -- cycle;
					
					\draw [semithick,->,>=stealth] (3,0.5) -- (3,2+0.1);
					\draw [semithick,->,>=stealth] (3,7) -- (3,9+0.1);
					\draw [semithick,->,>=stealth] (2,14) -- (2,15+0.75);
					\draw [semithick,->,>=stealth] (4,14) -- (4,15+0.75);
				\end{scope}
				
				\begin{scope}[xshift=16cm]
					%		\foreach \y in {1,...,15}
					%			\node at (0,\y) {\y};
					
					\node at (3,-1) {(c)};
					
					\node at (3,0) {\bf {0}};
					\node at (2,16.15) {\bf{0}};
					\node at (4,16.15) {\bf{0}};
					
					\foreach \x in {0,...,5}
					\draw (\x+0.5,0.5)--(\x+0.5,15.5);
					
					\foreach \y in {0,...,15}
					\draw (0.5,\y+0.5)--(5.5,\y+0.5);
					
					\foreach \x in {2,4}	
					\foreach \y in {1,...,6,8,9,10,11}
					\draw [fill = MidnightBlue!30] (\x-0.5,\y-0.5) -- 	(\x+0.5,\y-0.5) -- (\x+0.5,\y+0.5) -- (\x-0.5,\y+0.5) -- cycle;	
					
					\foreach \x in {1,5}	
					\foreach \y in {6,7,8,11,12,...,15}
					\draw [fill = MidnightBlue!30] (\x-0.5,\y-0.5) -- 	(\x+0.5,\y-0.5) -- (\x+0.5,\y+0.5) -- (\x-0.5,\y+0.5) -- cycle;	
					
					\foreach \x in {3}	
					\foreach \y in {13,...,15}
					\draw [fill = MidnightBlue!30] (\x-0.5,\y-0.5) -- 	(\x+0.5,\y-0.5) -- (\x+0.5,\y+0.5) -- (\x-0.5,\y+0.5) -- cycle;	
					
					\foreach \x in {3}
					\foreach \y in {2,3,4,5,7,9,10}
					\draw (\x-0.25,\y-0.25) -- (\x+0.25,\y-0.25) -- 	(\x+0.25,\y+0.25) -- (\x-0.25,\y+0.25) -- cycle;
					
					\foreach \x in {2,4}
					\foreach \y in {12,13,14}
					\draw (\x-0.25,\y-0.25) -- (\x+0.25,\y-0.25) -- 	(\x+0.25,\y+0.25) -- (\x-0.25,\y+0.25) -- cycle;
					
					\draw [semithick,->,>=stealth] (3,0.5) -- (3,2+0.1);
					\draw [semithick,->,>=stealth] (3,7) -- (3,9+0.1);
					\draw [semithick,->,>=stealth] (2,14) -- (2,15+0.75);
					\draw [semithick,->,>=stealth] (4,14) -- (4,15+0.75);
				\end{scope}
			\end{tikzpicture}
			\caption {Split gadget}
			\label {Fig_split_gadget}
		\end{figure}
		
		\subsection*{Clause gadget}
		
		The structure of the clause gadget for a clause $C=\{x \vee y\vee z\}$ is shown in Fig.~\ref{Fig_clause_gadget}. Since there is a vertical line block of 5 squares at the end of the horizontal arrow, the puzzle rules require that it be preceded by a vertical line block of 4 squares.
		\begin{figure}[p]
			\centering
			\begin{tikzpicture}[scale=0.40]
				
				%		\foreach \y in {1,...,9}
				%		\node at (0,\y) {\y};
				
				%		\foreach \x in {1,...,9}
				%		\node at (\x,10) {\x};
				
				\foreach \x in {0,...,9}
				\draw (\x+0.5,0.5)--(\x+0.5,9.5);
				
				\foreach \y in {0,...,9}
				\draw (0.5,\y+0.5)--(9.5,\y+0.5);
				
				\foreach \x in {1}	
				\foreach \y in {1,...,9}
				\draw [fill = MidnightBlue!30] (\x-0.5,\y-0.5) -- (\x+0.5,\y-0.5) -- (\x+0.5,\y+0.5) -- (\x-0.5,\y+0.5) -- cycle;	
				
				\foreach \x in {3,5,7}	
				\foreach \y in {1,...,5,7,8,9}
				\draw [fill = MidnightBlue!30] (\x-0.5,\y-0.5) -- (\x+0.5,\y-0.5) -- (\x+0.5,\y+0.5) -- (\x-0.5,\y+0.5) -- cycle;	
				
				\foreach \x in {2,4,6}	
				\foreach \y in {9}
				\draw [fill = MidnightBlue!30] (\x-0.5,\y-0.5) -- (\x+0.5,\y-0.5) -- (\x+0.5,\y+0.5) -- (\x-0.5,\y+0.5) -- cycle;	
				
				\foreach \x in {8}	
				\foreach \y in {3,9}
				\draw [fill = MidnightBlue!30] (\x-0.5,\y-0.5) -- (\x+0.5,\y-0.5) -- (\x+0.5,\y+0.5) -- (\x-0.5,\y+0.5) -- cycle;	
				
				\foreach \x in {9}	
				\foreach \y in {3,...,9}
				\draw [fill = MidnightBlue!30] (\x-0.5,\y-0.5) -- (\x+0.5,\y-0.5) -- (\x+0.5,\y+0.5) -- (\x-0.5,\y+0.5) -- cycle;

				\foreach \x in {2,4,6}
				\foreach \y in {2,3,5}
				\draw (\x-0.25,\y-0.25) -- (\x+0.25,\y-0.25) -- (\x+0.25,\y+0.25) -- (\x-0.25,\y+0.25) -- cycle;
				
				\foreach \x in {8}
				\foreach \y in {4,...,8}
				\draw (\x-0.25,\y-0.25) -- (\x+0.25,\y-0.25) -- (\x+0.25,\y+0.25) -- (\x-0.25,\y+0.25) -- cycle;
				
				\draw [semithick,->,>=stealth] (2,6) -- (8+0.1,6);
				
				\foreach \x in {2,4,6}
				\draw [semithick,->,>=stealth] (\x,0.5) -- (\x,2+0.1);
				
				\node at (2,0) {$x$};
				\node at (4,0) {$y$};
				\node at (6,0) {$z$};
				
				\node at (2,10) {$x$};
				\node at (3,10) {$\vee$};
				\node at (4,10) {$y$};
				\node at (5,10) {$\vee$};
				\node at (6,10) {$z$};
				
			\end{tikzpicture}
			\caption {Clause gadget}
			\label {Fig_clause_gadget}
		\end{figure}
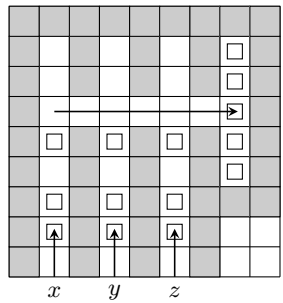
		
		It is easy to see that such a block can only be placed in one of the three columns corresponding to the literals $x$, $y$, and $z$.
		However, if a false signal is received for a literal, then the corresponding column is locked from below, because the same block cannot be placed on two different arrows simultaneously.
		Therefore, the clause gadget has a solution if and only if at least one of the literals $x$, $y$, or $z$ is satisfied.
		Examples of feasible cases with one (a), two (b), and three (c) true literals, as well as an infeasible case when all literals are false (d), are shown in Fig.~\ref{Fig_clause_gadget_feasible}.

		\subsection*{Crossover gadget}
		
		Signals coming from variable gadgets to clause gadgets may cross paths along their routes. To address this issue and ensure correct signal transmission without interference or leakage at wire intersections, we introduce the crossover gadget shown in Fig.~\ref{Fig_crossover_gadget}.

		\begin{figure}[p]
			\centering
			\begin{tikzpicture}[scale=0.40]
				
				%	\foreach \y in {1,...,9}
				%	\node at (0,\y) {\y};
				
				%	\foreach \x in {1,...,9}
				%	\node at (\x,10) {\x};
				
				\foreach \x in {0,...,9}
				\draw (\x+0.5,0.5)--(\x+0.5,9.5);
				
				\foreach \y in {0,...,9}
				\draw (0.5,\y+0.5)--(9.5,\y+0.5);
				
				\foreach \x in {1}	
				\foreach \y in {1,...,9}
				\draw [fill = MidnightBlue!30] (\x-0.5,\y-0.5) -- (\x+0.5,\y-0.5) -- (\x+0.5,\y+0.5) -- (\x-0.5,\y+0.5) -- cycle;	
				
				\foreach \x in {3,5,7}	
				\foreach \y in {1,...,5,7,8,9}
				\draw [fill = MidnightBlue!30] (\x-0.5,\y-0.5) -- (\x+0.5,\y-0.5) -- (\x+0.5,\y+0.5) -- (\x-0.5,\y+0.5) -- cycle;	
				
				\foreach \x in {2,4,6}	
				\foreach \y in {9}
				\draw [fill = MidnightBlue!30] (\x-0.5,\y-0.5) -- (\x+0.5,\y-0.5) -- (\x+0.5,\y+0.5) -- (\x-0.5,\y+0.5) -- cycle;	
				
				\foreach \x in {8}	
				\foreach \y in {3,9}
				\draw [fill = MidnightBlue!30] (\x-0.5,\y-0.5) -- (\x+0.5,\y-0.5) -- (\x+0.5,\y+0.5) -- (\x-0.5,\y+0.5) -- cycle;	
				
				\foreach \x in {9}	
				\foreach \y in {3,...,9}
				\draw [fill = MidnightBlue!30] (\x-0.5,\y-0.5) -- (\x+0.5,\y-0.5) -- (\x+0.5,\y+0.5) -- (\x-0.5,\y+0.5) -- cycle;

				\foreach \x in {2}
				\foreach \y in {2,3,5,6,7,8}
				\draw (\x-0.25,\y-0.25) -- (\x+0.25,\y-0.25) -- (\x+0.25,\y+0.25) -- (\x-0.25,\y+0.25) -- cycle;
				
				\foreach \x in {4,6}
				\foreach \y in {2,3,4,5}
				\draw (\x-0.25,\y-0.25) -- (\x+0.25,\y-0.25) -- (\x+0.25,\y+0.25) -- (\x-0.25,\y+0.25) -- cycle;
				
				\foreach \x in {8}
				\foreach \y in {4,...,8}
				\draw (\x-0.25,\y-0.25) -- (\x+0.25,\y-0.25) -- (\x+0.25,\y+0.25) -- (\x-0.25,\y+0.25) -- cycle;
				
				\draw [semithick,->,>=stealth] (2,6) -- (8+0.1,6);
				
				\foreach \x in {2,4,6}
				\draw [semithick,->,>=stealth] (\x,0.5) -- (\x,2+0.1);
				
				\node at (2,0) {\bf{1}};
				\node at (4,0) {\bf{0}};
				\node at (6,0) {\bf{0}};
				
				\node at (5,-1) {(a)};

				\begin{scope}[xshift=14cm]
					%		\foreach \y in {1,...,9}
					%		\node at (0,\y) {\y};
					
					%		\foreach \x in {1,...,9}
					%		\node at (\x,10) {\x};
					
					\foreach \x in {0,...,9}
					\draw (\x+0.5,0.5)--(\x+0.5,9.5);
					
					\foreach \y in {0,...,9}
					\draw (0.5,\y+0.5)--(9.5,\y+0.5);
					
					\foreach \x in {1}	
					\foreach \y in {1,...,9}
					\draw [fill = MidnightBlue!30] (\x-0.5,\y-0.5) -- (\x+0.5,\y-0.5) -- (\x+0.5,\y+0.5) -- (\x-0.5,\y+0.5) -- cycle;	
					
					\foreach \x in {3,5,7}	
					\foreach \y in {1,...,5,7,8,9}
					\draw [fill = MidnightBlue!30] (\x-0.5,\y-0.5) -- (\x+0.5,\y-0.5) -- (\x+0.5,\y+0.5) -- (\x-0.5,\y+0.5) -- cycle;	
					
					\foreach \x in {2,4,6}	
					\foreach \y in {9}
					\draw [fill = MidnightBlue!30] (\x-0.5,\y-0.5) -- (\x+0.5,\y-0.5) -- (\x+0.5,\y+0.5) -- (\x-0.5,\y+0.5) -- cycle;	
					
					\foreach \x in {8}	
					\foreach \y in {3,9}
					\draw [fill = MidnightBlue!30] (\x-0.5,\y-0.5) -- (\x+0.5,\y-0.5) -- (\x+0.5,\y+0.5) -- (\x-0.5,\y+0.5) -- cycle;	
					
					\foreach \x in {9}	
					\foreach \y in {3,...,9}
					\draw [fill = MidnightBlue!30] (\x-0.5,\y-0.5) -- (\x+0.5,\y-0.5) -- (\x+0.5,\y+0.5) -- (\x-0.5,\y+0.5) -- cycle;	
					
					\foreach \x in {2}
					\foreach \y in {2,3,5,6,7}
					\draw (\x-0.25,\y-0.25) -- (\x+0.25,\y-0.25) -- (\x+0.25,\y+0.25) -- (\x-0.25,\y+0.25) -- cycle;
					
					\foreach \x in {4}
					\foreach \y in {2,3,5,6,7,8}
					\draw (\x-0.25,\y-0.25) -- (\x+0.25,\y-0.25) -- (\x+0.25,\y+0.25) -- (\x-0.25,\y+0.25) -- cycle;
					
					\foreach \x in {6}
					\foreach \y in {2,3,4,5}
					\draw (\x-0.25,\y-0.25) -- (\x+0.25,\y-0.25) -- (\x+0.25,\y+0.25) -- (\x-0.25,\y+0.25) -- cycle;
					
					\foreach \x in {8}
					\foreach \y in {4,...,8}
					\draw (\x-0.25,\y-0.25) -- (\x+0.25,\y-0.25) -- (\x+0.25,\y+0.25) -- (\x-0.25,\y+0.25) -- cycle;
					
					\draw [semithick,->,>=stealth] (2,6) -- (8+0.1,6);
					
					\foreach \x in {2,4,6}
					\draw [semithick,->,>=stealth] (\x,0.5) -- (\x,2+0.1);
					
					\node at (2,0) {\bf{1}};
					\node at (4,0) {\bf{1}};
					\node at (6,0) {\bf{0}};
					
					\node at (5,-1) {(b)};
				\end{scope}

				\begin{scope}[yshift=-13cm]
					%		\foreach \y in {1,...,9}
					%		\node at (0,\y) {\y};
					
					%		\foreach \x in {1,...,9}
					%		\node at (\x,10) {\x};
					
					\foreach \x in {0,...,9}
					\draw (\x+0.5,0.5)--(\x+0.5,9.5);
					
					\foreach \y in {0,...,9}
					\draw (0.5,\y+0.5)--(9.5,\y+0.5);
					
					\foreach \x in {1}	
					\foreach \y in {1,...,9}
					\draw [fill = MidnightBlue!30] (\x-0.5,\y-0.5) -- (\x+0.5,\y-0.5) -- (\x+0.5,\y+0.5) -- (\x-0.5,\y+0.5) -- cycle;	
					
					\foreach \x in {3,5,7}	
					\foreach \y in {1,...,5,7,8,9}
					\draw [fill = MidnightBlue!30] (\x-0.5,\y-0.5) -- (\x+0.5,\y-0.5) -- (\x+0.5,\y+0.5) -- (\x-0.5,\y+0.5) -- cycle;	
					
					\foreach \x in {2,4,6}	
					\foreach \y in {9}
					\draw [fill = MidnightBlue!30] (\x-0.5,\y-0.5) -- (\x+0.5,\y-0.5) -- (\x+0.5,\y+0.5) -- (\x-0.5,\y+0.5) -- cycle;	
					
					\foreach \x in {8}	
					\foreach \y in {3,9}
					\draw [fill = MidnightBlue!30] (\x-0.5,\y-0.5) -- (\x+0.5,\y-0.5) -- (\x+0.5,\y+0.5) -- (\x-0.5,\y+0.5) -- cycle;	
					
					\foreach \x in {9}	
					\foreach \y in {3,...,9}
					\draw [fill = MidnightBlue!30] (\x-0.5,\y-0.5) -- (\x+0.5,\y-0.5) -- (\x+0.5,\y+0.5) -- (\x-0.5,\y+0.5) -- cycle;

					\foreach \x in {2}
					\foreach \y in {2,3,5,6}
					\draw (\x-0.25,\y-0.25) -- (\x+0.25,\y-0.25) -- (\x+0.25,\y+0.25) -- (\x-0.25,\y+0.25) -- cycle;
					
					\foreach \x in {4}
					\foreach \y in {2,3,5,6,7}
					\draw (\x-0.25,\y-0.25) -- (\x+0.25,\y-0.25) -- (\x+0.25,\y+0.25) -- (\x-0.25,\y+0.25) -- cycle;
					
					\foreach \x in {6}
					\foreach \y in {2,3,5,6,7,8}
					\draw (\x-0.25,\y-0.25) -- (\x+0.25,\y-0.25) -- (\x+0.25,\y+0.25) -- (\x-0.25,\y+0.25) -- cycle;
					
					\foreach \x in {8}
					\foreach \y in {4,...,8}
					\draw (\x-0.25,\y-0.25) -- (\x+0.25,\y-0.25) -- (\x+0.25,\y+0.25) -- (\x-0.25,\y+0.25) -- cycle;
					
					\draw [semithick,->,>=stealth] (2,6) -- (8+0.1,6);
					
					\foreach \x in {2,4,6}
					\draw [semithick,->,>=stealth] (\x,0.5) -- (\x,2+0.1);
					
					\node at (2,0) {\bf{1}};
					\node at (4,0) {\bf{1}};
					\node at (6,0) {\bf{1}};
					
					\node at (5,-1) {(c)};
				\end{scope}

				\begin{scope}[xshift=14cm,yshift=-13cm]
					%		\foreach \y in {1,...,9}
					%		\node at (0,\y) {\y};
					
					%		\foreach \x in {1,...,9}
					%		\node at (\x,10) {\x};
					
					\foreach \x in {0,...,9}
					\draw (\x+0.5,0.5)--(\x+0.5,9.5);
					
					\foreach \y in {0,...,9}
					\draw (0.5,\y+0.5)--(9.5,\y+0.5);
					
					\foreach \x in {1}	
					\foreach \y in {1,...,9}
					\draw [fill = MidnightBlue!30] (\x-0.5,\y-0.5) -- (\x+0.5,\y-0.5) -- (\x+0.5,\y+0.5) -- (\x-0.5,\y+0.5) -- cycle;	
					
					\foreach \x in {3,5,7}	
					\foreach \y in {1,...,5,7,8,9}
					\draw [fill = MidnightBlue!30] (\x-0.5,\y-0.5) -- (\x+0.5,\y-0.5) -- (\x+0.5,\y+0.5) -- (\x-0.5,\y+0.5) -- cycle;	
					
					\foreach \x in {2,4,6}	
					\foreach \y in {9}
					\draw [fill = MidnightBlue!30] (\x-0.5,\y-0.5) -- (\x+0.5,\y-0.5) -- (\x+0.5,\y+0.5) -- (\x-0.5,\y+0.5) -- cycle;	
					
					\foreach \x in {8}	
					\foreach \y in {3,9}
					\draw [fill = MidnightBlue!30] (\x-0.5,\y-0.5) -- (\x+0.5,\y-0.5) -- (\x+0.5,\y+0.5) -- (\x-0.5,\y+0.5) -- cycle;	
					
					\foreach \x in {9}	
					\foreach \y in {3,...,9}
					\draw [fill = MidnightBlue!30] (\x-0.5,\y-0.5) -- (\x+0.5,\y-0.5) -- (\x+0.5,\y+0.5) -- (\x-0.5,\y+0.5) -- cycle;

					\foreach \x in {2,4,6}
					\foreach \y in {2,3,4,5}
					\draw (\x-0.25,\y-0.25) -- (\x+0.25,\y-0.25) -- (\x+0.25,\y+0.25) -- (\x-0.25,\y+0.25) -- cycle;
					
					\foreach \x in {8}
					\foreach \y in {4,...,8}
					\draw (\x-0.25,\y-0.25) -- (\x+0.25,\y-0.25) -- (\x+0.25,\y+0.25) -- (\x-0.25,\y+0.25) -- cycle;
					
					\draw [semithick,->,>=stealth] (2,6) -- (8+0.1,6);
					
					\foreach \x in {2,4,6}
					\draw [semithick,->,>=stealth] (\x,0.5) -- (\x,2+0.1);
					
					\node at (2,0) {\bf{0}};
					\node at (4,0) {\bf{0}};
					\node at (6,0) {\bf{0}};
					
					\node at (5,-1) {(d)};
				\end{scope}
				
			\end{tikzpicture}
			\caption {Three feasible cases (a), (b), (c), and one infeasible case (d) of a clause gadget}
			\label {Fig_clause_gadget_feasible}
		\end{figure}
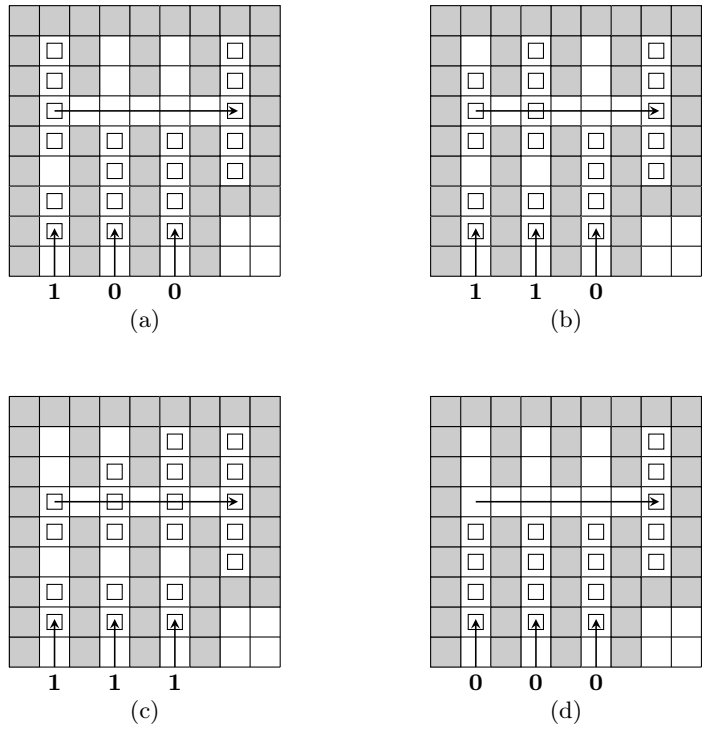

		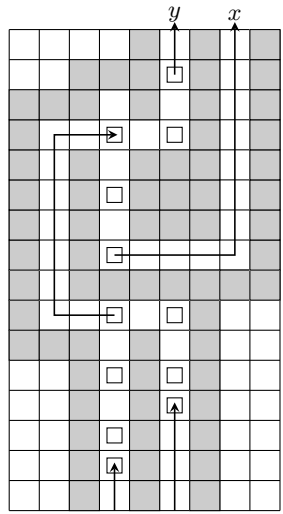
\begin{figure}[p]
			\centering
			\begin{tikzpicture}[scale=0.40]
				
				%		\foreach \y in {1,...,16}
				%		\node at (0,\y) {\y};
				
				%		\foreach \x in {1,...,9}
				%		\node at (\x,0) {\x};
				
				\foreach \x in {0,...,9}
				\draw (\x+0.5,0.5)--(\x+0.5,16.5);
				
				\foreach \y in {0,...,16}
				\draw (0.5,\y+0.5)--(9.5,\y+0.5);

				\foreach \x in {4}
				\foreach \y in {2,3,5,7,9,11,13}
				\draw (\x-0.25,\y-0.25) -- (\x+0.25,\y-0.25) -- (\x+0.25,\y+0.25) -- (\x-0.25,\y+0.25) -- cycle;
				
				\foreach \x in {6}
				\foreach \y in {4,5,7,13,15}
				\draw (\x-0.25,\y-0.25) -- (\x+0.25,\y-0.25) -- (\x+0.25,\y+0.25) -- (\x-0.25,\y+0.25) -- cycle;
				
				\draw [semithick,->,>=stealth] (4,0.5) -- (4,2+0.1);
				\draw [semithick,->,>=stealth] (6,0.5) -- (6,4+0.1);
				
				\draw [semithick,->,>=stealth] (4,7) -- (2,7) -- (2,13) -- (4+0.1,13);
				
				\draw [semithick,->,>=stealth] (4,9) -- (8,9) -- (8,16+0.75);
				\draw [semithick,->,>=stealth] (6,15) -- (6,16+0.75);
				
				\node at (6,17.15) {$y$};
				\node at (8,17.15) {$x$};
				
				\node at (4,0) {$x$};
				\node at (6,0) {$y$};

				\foreach \x in {1}	
				\foreach \y in {6,...,14}
				\draw [fill = MidnightBlue!30] (\x-0.5,\y-0.5) -- (\x+0.5,\y-0.5) -- (\x+0.5,\y+0.5) -- (\x-0.5,\y+0.5) -- cycle;	
				
				\foreach \x in {2}	
				\foreach \y in {6,14}
				\draw [fill = MidnightBlue!30] (\x-0.5,\y-0.5) -- (\x+0.5,\y-0.5) -- (\x+0.5,\y+0.5) -- (\x-0.5,\y+0.5) -- cycle;	
				
				\foreach \x in {3}	
				\foreach \y in {1,...,6,8,9,10,11,12,14,15}
				\draw [fill = MidnightBlue!30] (\x-0.5,\y-0.5) -- (\x+0.5,\y-0.5) -- (\x+0.5,\y+0.5) -- (\x-0.5,\y+0.5) -- cycle;	
				
				\foreach \x in {4}	
				\foreach \y in {8,15}
				\draw [fill = MidnightBlue!30] (\x-0.5,\y-0.5) -- (\x+0.5,\y-0.5) -- (\x+0.5,\y+0.5) -- (\x-0.5,\y+0.5) -- cycle;	
				
				\foreach \x in {5}	
				\foreach \y in {1,...,6,8,10,11,12,14,15,16}
				\draw [fill = MidnightBlue!30] (\x-0.5,\y-0.5) -- (\x+0.5,\y-0.5) -- (\x+0.5,\y+0.5) -- (\x-0.5,\y+0.5) -- cycle;	
				
				\foreach \x in {6}	
				\foreach \y in {8,10,11,12}
				\draw [fill = MidnightBlue!30] (\x-0.5,\y-0.5) -- (\x+0.5,\y-0.5) -- (\x+0.5,\y+0.5) -- (\x-0.5,\y+0.5) -- cycle;	
				
				\foreach \x in {7}	
				\foreach \y in {1,...,8,10,11,12,13,14,15,16}
				\draw [fill = MidnightBlue!30] (\x-0.5,\y-0.5) -- (\x+0.5,\y-0.5) -- (\x+0.5,\y+0.5) -- (\x-0.5,\y+0.5) -- cycle;	
				
				\foreach \x in {8}	
				\foreach \y in {8}
				\draw [fill = MidnightBlue!30] (\x-0.5,\y-0.5) -- (\x+0.5,\y-0.5) -- (\x+0.5,\y+0.5) -- (\x-0.5,\y+0.5) -- cycle;	
				
				\foreach \x in {9}	
				\foreach \y in {8,...,16}
				\draw [fill = MidnightBlue!30] (\x-0.5,\y-0.5) -- (\x+0.5,\y-0.5) -- (\x+0.5,\y+0.5) -- (\x-0.5,\y+0.5) -- cycle;	
				
			\end{tikzpicture}
			\caption {Crossover gadget}
			\label {Fig_crossover_gadget}
		\end{figure}
		
		Here, to cross the signals, we use a $\mathrm{\Gamma}$-shaped polyomino, where the vertical block corresponds to signal $x$, and the horizontal one to $y$. The trick is that according to the rules of the \textsc{Evolomino} puzzle, consecutive blocks cannot be rotated in the direction of the arrow: horizontal blocks remain horizontal, and vertical blocks remain vertical. This allows us to preserve the values of the $x$ and $y$ signals, while swapping the order of the wires.
		
		It remains to verify that there are exactly four feasible solutions to the crossover gadget that satisfy the rules of the puzzle, corresponding to the four possible combinations of values for the intersecting signals (see Fig.~\ref{Fig_crossover_gadget_cases}).

		\begin{figure}[p]
			\centering
			\begin{tikzpicture}[scale=0.40]
				
				%		\foreach \y in {1,...,16}
				%		\node at (0,\y) {\y};
				
				\foreach \x in {0,...,9}
				\draw (\x+0.5,0.5)--(\x+0.5,16.5);
				
				\foreach \y in {0,...,16}
				\draw (0.5,\y+0.5)--(9.5,\y+0.5);
				
				\node at (4,0) {\bf{0}};
				\node at (6,0) {\bf{0}};
				
				\node at (6,17.15) {\bf{0}};
				\node at (8,17.15) {\bf{0}};

				\foreach \x in {4}
				\foreach \y in {2,3,4,5,7,9,10,11,13,14}
				\draw (\x-0.25,\y-0.25) -- (\x+0.25,\y-0.25) -- (\x+0.25,\y+0.25) -- (\x-0.25,\y+0.25) -- cycle;
				
				\foreach \x in {6}
				\foreach \y in {4,5,6,7,13,14,15}
				\draw (\x-0.25,\y-0.25) -- (\x+0.25,\y-0.25) -- (\x+0.25,\y+0.25) -- (\x-0.25,\y+0.25) -- cycle;
				
				\draw [semithick,->,>=stealth] (4,0.5) -- (4,2+0.1);
				\draw [semithick,->,>=stealth] (6,0.5) -- (6,4+0.1);
				
				\draw [semithick,->,>=stealth] (4,7) -- (2,7) -- (2,13) -- (4+0.1,13);
				
				\draw [semithick,->,>=stealth] (4,9) -- (8,9) -- (8,16+0.75);
				\draw [semithick,->,>=stealth] (6,15) -- (6,16+0.75);

				\foreach \x in {1}	
				\foreach \y in {6,...,14}
				\draw [fill = MidnightBlue!30] (\x-0.5,\y-0.5) -- (\x+0.5,\y-0.5) -- (\x+0.5,\y+0.5) -- (\x-0.5,\y+0.5) -- cycle;	
				
				\foreach \x in {2}	
				\foreach \y in {6,14}
				\draw [fill = MidnightBlue!30] (\x-0.5,\y-0.5) -- (\x+0.5,\y-0.5) -- (\x+0.5,\y+0.5) -- (\x-0.5,\y+0.5) -- cycle;	
				
				\foreach \x in {3}	
				\foreach \y in {1,...,6,8,9,10,11,12,14,15}
				\draw [fill = MidnightBlue!30] (\x-0.5,\y-0.5) -- (\x+0.5,\y-0.5) -- (\x+0.5,\y+0.5) -- (\x-0.5,\y+0.5) -- cycle;	
				
				\foreach \x in {4}	
				\foreach \y in {8,15}
				\draw [fill = MidnightBlue!30] (\x-0.5,\y-0.5) -- (\x+0.5,\y-0.5) -- (\x+0.5,\y+0.5) -- (\x-0.5,\y+0.5) -- cycle;	
				
				\foreach \x in {5}	
				\foreach \y in {1,...,6,8,10,11,12,14,15,16}
				\draw [fill = MidnightBlue!30] (\x-0.5,\y-0.5) -- (\x+0.5,\y-0.5) -- (\x+0.5,\y+0.5) -- (\x-0.5,\y+0.5) -- cycle;	
				
				\foreach \x in {6}	
				\foreach \y in {8,10,11,12}
				\draw [fill = MidnightBlue!30] (\x-0.5,\y-0.5) -- (\x+0.5,\y-0.5) -- (\x+0.5,\y+0.5) -- (\x-0.5,\y+0.5) -- cycle;	
				
				\foreach \x in {7}	
				\foreach \y in {1,...,8,10,11,12,13,14,15,16}
				\draw [fill = MidnightBlue!30] (\x-0.5,\y-0.5) -- (\x+0.5,\y-0.5) -- (\x+0.5,\y+0.5) -- (\x-0.5,\y+0.5) -- cycle;	
				
				\foreach \x in {8}	
				\foreach \y in {8}
				\draw [fill = MidnightBlue!30] (\x-0.5,\y-0.5) -- (\x+0.5,\y-0.5) -- (\x+0.5,\y+0.5) -- (\x-0.5,\y+0.5) -- cycle;	
				
				\foreach \x in {9}	
				\foreach \y in {8,...,16}
				\draw [fill = MidnightBlue!30] (\x-0.5,\y-0.5) -- (\x+0.5,\y-0.5) -- (\x+0.5,\y+0.5) -- (\x-0.5,\y+0.5) -- cycle;

				\begin{scope}[xshift=14cm]
					%		\foreach \y in {1,...,16}
					%		\node at (0,\y) {\y};
					
					\foreach \x in {0,...,9}
					\draw (\x+0.5,0.5)--(\x+0.5,16.5);
					
					\foreach \y in {0,...,16}
					\draw (0.5,\y+0.5)--(9.5,\y+0.5);
					
					\node at (4,0) {\bf{0}};
					\node at (6,0) {\bf{1}};
					
					\node at (6,17.15) {\bf{1}};
					\node at (8,17.15) {\bf{0}};

					\foreach \x in {4}
					\foreach \y in {2,3,4,5,7,9,10,11,13,14}
					\draw (\x-0.25,\y-0.25) -- (\x+0.25,\y-0.25) -- (\x+0.25,\y+0.25) -- (\x-0.25,\y+0.25) -- cycle;
					
					\foreach \x in {5}
					\foreach \y in {7,13}
					\draw (\x-0.25,\y-0.25) -- (\x+0.25,\y-0.25) -- (\x+0.25,\y+0.25) -- (\x-0.25,\y+0.25) -- cycle;
					
					\foreach \x in {6}
					\foreach \y in {4,5,7,13,15}
					\draw (\x-0.25,\y-0.25) -- (\x+0.25,\y-0.25) -- (\x+0.25,\y+0.25) -- (\x-0.25,\y+0.25) -- cycle;
					
					\draw [semithick,->,>=stealth] (4,0.5) -- (4,2+0.1);
					\draw [semithick,->,>=stealth] (6,0.5) -- (6,4+0.1);
					
					\draw [semithick,->,>=stealth] (4,7) -- (2,7) -- (2,13) -- (4+0.1,13);
					
					\draw [semithick,->,>=stealth] (4,9) -- (8,9) -- (8,16+0.75);
					\draw [semithick,->,>=stealth] (6,15) -- (6,16+0.75);

					\foreach \x in {1}	
					\foreach \y in {6,...,14}
					\draw [fill = MidnightBlue!30] (\x-0.5,\y-0.5) -- (\x+0.5,\y-0.5) -- (\x+0.5,\y+0.5) -- (\x-0.5,\y+0.5) -- cycle;	
					
					\foreach \x in {2}	
					\foreach \y in {6,14}
					\draw [fill = MidnightBlue!30] (\x-0.5,\y-0.5) -- (\x+0.5,\y-0.5) -- (\x+0.5,\y+0.5) -- (\x-0.5,\y+0.5) -- cycle;	
					
					\foreach \x in {3}	
					\foreach \y in {1,...,6,8,9,10,11,12,14,15}
					\draw [fill = MidnightBlue!30] (\x-0.5,\y-0.5) -- (\x+0.5,\y-0.5) -- (\x+0.5,\y+0.5) -- (\x-0.5,\y+0.5) -- cycle;	
					
					\foreach \x in {4}	
					\foreach \y in {8,15}
					\draw [fill = MidnightBlue!30] (\x-0.5,\y-0.5) -- (\x+0.5,\y-0.5) -- (\x+0.5,\y+0.5) -- (\x-0.5,\y+0.5) -- cycle;	
					
					\foreach \x in {5}	
					\foreach \y in {1,...,6,8,10,11,12,14,15,16}
					\draw [fill = MidnightBlue!30] (\x-0.5,\y-0.5) -- (\x+0.5,\y-0.5) -- (\x+0.5,\y+0.5) -- (\x-0.5,\y+0.5) -- cycle;	
					
					\foreach \x in {6}	
					\foreach \y in {8,10,11,12}
					\draw [fill = MidnightBlue!30] (\x-0.5,\y-0.5) -- (\x+0.5,\y-0.5) -- (\x+0.5,\y+0.5) -- (\x-0.5,\y+0.5) -- cycle;	
					
					\foreach \x in {7}	
					\foreach \y in {1,...,8,10,11,12,13,14,15,16}
					\draw [fill = MidnightBlue!30] (\x-0.5,\y-0.5) -- (\x+0.5,\y-0.5) -- (\x+0.5,\y+0.5) -- (\x-0.5,\y+0.5) -- cycle;	
					
					\foreach \x in {8}	
					\foreach \y in {8}
					\draw [fill = MidnightBlue!30] (\x-0.5,\y-0.5) -- (\x+0.5,\y-0.5) -- (\x+0.5,\y+0.5) -- (\x-0.5,\y+0.5) -- cycle;	
					
					\foreach \x in {9}	
					\foreach \y in {8,...,16}
					\draw [fill = MidnightBlue!30] (\x-0.5,\y-0.5) -- (\x+0.5,\y-0.5) -- (\x+0.5,\y+0.5) -- (\x-0.5,\y+0.5) -- cycle;	
				\end{scope}

				\begin{scope}[yshift=-20cm]
					%		\foreach \y in {1,...,16}
					%		\node at (0,\y) {\y};
					
					\foreach \x in {0,...,9}
					\draw (\x+0.5,0.5)--(\x+0.5,16.5);
					
					\foreach \y in {0,...,16}
					\draw (0.5,\y+0.5)--(9.5,\y+0.5);
					
					\node at (4,0) {\bf{1}};
					\node at (6,0) {\bf{0}};
					
					\node at (6,17.15) {\bf{0}};
					\node at (8,17.15) {\bf{1}};

					\foreach \x in {4}
					\foreach \y in {2,3,5,6,7,9,11,12,13,14}
					\draw (\x-0.25,\y-0.25) -- (\x+0.25,\y-0.25) -- (\x+0.25,\y+0.25) -- (\x-0.25,\y+0.25) -- cycle;
					
					\foreach \x in {6}
					\foreach \y in {4,5,6,7,13,14,15}
					\draw (\x-0.25,\y-0.25) -- (\x+0.25,\y-0.25) -- (\x+0.25,\y+0.25) -- (\x-0.25,\y+0.25) -- cycle;
					
					\draw [semithick,->,>=stealth] (4,0.5) -- (4,2+0.1);
					\draw [semithick,->,>=stealth] (6,0.5) -- (6,4+0.1);
					
					\draw [semithick,->,>=stealth] (4,7) -- (2,7) -- (2,13) -- (4+0.1,13);
					
					\draw [semithick,->,>=stealth] (4,9) -- (8,9) -- (8,16+0.75);
					\draw [semithick,->,>=stealth] (6,15) -- (6,16+0.75);

					\foreach \x in {1}	
					\foreach \y in {6,...,14}
					\draw [fill = MidnightBlue!30] (\x-0.5,\y-0.5) -- (\x+0.5,\y-0.5) -- (\x+0.5,\y+0.5) -- (\x-0.5,\y+0.5) -- cycle;	
					
					\foreach \x in {2}	
					\foreach \y in {6,14}
					\draw [fill = MidnightBlue!30] (\x-0.5,\y-0.5) -- (\x+0.5,\y-0.5) -- (\x+0.5,\y+0.5) -- (\x-0.5,\y+0.5) -- cycle;	
					
					\foreach \x in {3}	
					\foreach \y in {1,...,6,8,9,10,11,12,14,15}
					\draw [fill = MidnightBlue!30] (\x-0.5,\y-0.5) -- (\x+0.5,\y-0.5) -- (\x+0.5,\y+0.5) -- (\x-0.5,\y+0.5) -- cycle;	
					
					\foreach \x in {4}	
					\foreach \y in {8,15}
					\draw [fill = MidnightBlue!30] (\x-0.5,\y-0.5) -- (\x+0.5,\y-0.5) -- (\x+0.5,\y+0.5) -- (\x-0.5,\y+0.5) -- cycle;	
					
					\foreach \x in {5}	
					\foreach \y in {1,...,6,8,10,11,12,14,15,16}
					\draw [fill = MidnightBlue!30] (\x-0.5,\y-0.5) -- (\x+0.5,\y-0.5) -- (\x+0.5,\y+0.5) -- (\x-0.5,\y+0.5) -- cycle;	
					
					\foreach \x in {6}	
					\foreach \y in {8,10,11,12}
					\draw [fill = MidnightBlue!30] (\x-0.5,\y-0.5) -- (\x+0.5,\y-0.5) -- (\x+0.5,\y+0.5) -- (\x-0.5,\y+0.5) -- cycle;	
					
					\foreach \x in {7}	
					\foreach \y in {1,...,8,10,11,12,13,14,15,16}
					\draw [fill = MidnightBlue!30] (\x-0.5,\y-0.5) -- (\x+0.5,\y-0.5) -- (\x+0.5,\y+0.5) -- (\x-0.5,\y+0.5) -- cycle;	
					
					\foreach \x in {8}	
					\foreach \y in {8}
					\draw [fill = MidnightBlue!30] (\x-0.5,\y-0.5) -- (\x+0.5,\y-0.5) -- (\x+0.5,\y+0.5) -- (\x-0.5,\y+0.5) -- cycle;	
					
					\foreach \x in {9}	
					\foreach \y in {8,...,16}
					\draw [fill = MidnightBlue!30] (\x-0.5,\y-0.5) -- (\x+0.5,\y-0.5) -- (\x+0.5,\y+0.5) -- (\x-0.5,\y+0.5) -- cycle;	
				\end{scope}

				\begin{scope}[xshift=14cm,yshift=-20cm]
					%		\foreach \y in {1,...,16}
					%		\node at (0,\y) {\y};
					
					\foreach \x in {0,...,9}
					\draw (\x+0.5,0.5)--(\x+0.5,16.5);
					
					\foreach \y in {0,...,16}
					\draw (0.5,\y+0.5)--(9.5,\y+0.5);
					
					\node at (4,0) {\bf{1}};
					\node at (6,0) {\bf{1}};
					
					\node at (6,17.15) {\bf{1}};
					\node at (8,17.15) {\bf{1}};

					\foreach \x in {4}
					\foreach \y in {2,3,5,6,7,9,11,12,13,14}
					\draw (\x-0.25,\y-0.25) -- (\x+0.25,\y-0.25) -- (\x+0.25,\y+0.25) -- (\x-0.25,\y+0.25) -- cycle;
					
					\foreach \x in {5}
					\foreach \y in {7,13}
					\draw (\x-0.25,\y-0.25) -- (\x+0.25,\y-0.25) -- (\x+0.25,\y+0.25) -- (\x-0.25,\y+0.25) -- cycle;
					
					\foreach \x in {6}
					\foreach \y in {4,5,7,13,15}
					\draw (\x-0.25,\y-0.25) -- (\x+0.25,\y-0.25) -- (\x+0.25,\y+0.25) -- (\x-0.25,\y+0.25) -- cycle;
					
					\draw [semithick,->,>=stealth] (4,0.5) -- (4,2+0.1);
					\draw [semithick,->,>=stealth] (6,0.5) -- (6,4+0.1);
					
					\draw [semithick,->,>=stealth] (4,7) -- (2,7) -- (2,13) -- (4+0.1,13);
					
					\draw [semithick,->,>=stealth] (4,9) -- (8,9) -- (8,16+0.75);
					\draw [semithick,->,>=stealth] (6,15) -- (6,16+0.75);

					\foreach \x in {1}	
					\foreach \y in {6,...,14}
					\draw [fill = MidnightBlue!30] (\x-0.5,\y-0.5) -- (\x+0.5,\y-0.5) -- (\x+0.5,\y+0.5) -- (\x-0.5,\y+0.5) -- cycle;	
					
					\foreach \x in {2}	
					\foreach \y in {6,14}
					\draw [fill = MidnightBlue!30] (\x-0.5,\y-0.5) -- (\x+0.5,\y-0.5) -- (\x+0.5,\y+0.5) -- (\x-0.5,\y+0.5) -- cycle;	
					
					\foreach \x in {3}	
					\foreach \y in {1,...,6,8,9,10,11,12,14,15}
					\draw [fill = MidnightBlue!30] (\x-0.5,\y-0.5) -- (\x+0.5,\y-0.5) -- (\x+0.5,\y+0.5) -- (\x-0.5,\y+0.5) -- cycle;	
					
					\foreach \x in {4}	
					\foreach \y in {8,15}
					\draw [fill = MidnightBlue!30] (\x-0.5,\y-0.5) -- (\x+0.5,\y-0.5) -- (\x+0.5,\y+0.5) -- (\x-0.5,\y+0.5) -- cycle;	
					
					\foreach \x in {5}	
					\foreach \y in {1,...,6,8,10,11,12,14,15,16}
					\draw [fill = MidnightBlue!30] (\x-0.5,\y-0.5) -- (\x+0.5,\y-0.5) -- (\x+0.5,\y+0.5) -- (\x-0.5,\y+0.5) -- cycle;	
					
					\foreach \x in {6}	
					\foreach \y in {8,10,11,12}
					\draw [fill = MidnightBlue!30] (\x-0.5,\y-0.5) -- (\x+0.5,\y-0.5) -- (\x+0.5,\y+0.5) -- (\x-0.5,\y+0.5) -- cycle;	
					
					\foreach \x in {7}	
					\foreach \y in {1,...,8,10,11,12,13,14,15,16}
					\draw [fill = MidnightBlue!30] (\x-0.5,\y-0.5) -- (\x+0.5,\y-0.5) -- (\x+0.5,\y+0.5) -- (\x-0.5,\y+0.5) -- cycle;	
					
					\foreach \x in {8}	
					\foreach \y in {8}
					\draw [fill = MidnightBlue!30] (\x-0.5,\y-0.5) -- (\x+0.5,\y-0.5) -- (\x+0.5,\y+0.5) -- (\x-0.5,\y+0.5) -- cycle;	
					
					\foreach \x in {9}	
					\foreach \y in {8,...,16}
					\draw [fill = MidnightBlue!30] (\x-0.5,\y-0.5) -- (\x+0.5,\y-0.5) -- (\x+0.5,\y+0.5) -- (\x-0.5,\y+0.5) -- cycle;	
				\end{scope}
				
			\end{tikzpicture}
			\caption {Four feasible solutions for a crossover gadget}
			\label {Fig_crossover_gadget_cases}
		\end{figure}

		\subsection*{Complexity of reduction and board size}
		
		Let's consider an instance of the \textsc{3-SAT} problem that consists of a collection of clauses $C = \{C_1,\ldots,C_m\}$ over Boolean variables $X=\{x_1,\ldots,x_n\}$. 
		
		Let's represent a CNF formula as a bipartite incidence graph with $n$ variables in one part and $m$ clauses in the other. An edge $(x,C)$ indicates that the variable $x$ appears in clause $C$. Hence, we obtain a bipartite graph with $n+m$ vertices and $3m$ edges, based on which we construct an \textsc{Evolomino} board. An example of an incidence graph of a CNF formula is shown in Fig.~\ref{Fig_full_board} with solid edges corresponding to non-inverted variables and dashed edges to inverted variables.
		
		Let's estimate the total number of gadgets required:
		\begin{itemize}
			\item $n$ variable and wire gadgets, one for each Boolean variable $x_1,\ldots,x_n$; 
			\item $m$ clause gadgets, one for each clause $C_1,\ldots,C_m$;
			\item $3m - n$ split gadgets, to assign a wire to each edge;
			\item at most $3m$ negation gadgets, one for each edge, for an inverted variable;
			\item at most $\binom{3m}{2}$ crossover gadgets, one for each crossing of edges.
		\end{itemize}
		
		Regarding the crossing number, we note that the incidence graph can always be drawn on the rectangular grid so that no three edges are crossing at a single grid cell, and any pair of edges crosses at most once. More precise estimates of the crossing number of a bipartite graph can be obtained from graph theory (see, for example, Turán's brick factory problem~\cite{Székely2016,Zarankiewicz1955}).
		
		Since each gadget, except for the wire, has a constant size, an instance of the \textsc{3-SAT} problem with $m$ clauses and $n$ variables can be reduced to an \textsc{Evolomino} puzzle on a board of size $O(m^2+n) \times O(m^2+n)$. As for the wires, they only need to transmit signals from the variable gadgets to the clause gadgets, so their length is proportional to the overall board size. 	Therefore, the reduction is performed in polynomial time.
		
		Note that both the board size and the complexity of the reduction can be significantly reduced if we exclude the crossover gadgets and consider the reduction from the NP-complete \textsc{planar 3-SAT} problem~\cite{Lichtenstein1982} where the incidence graph of a Boolean formula can be embedded on a plane, i.e. drawn without edge-crossings.  For related results on upper bounds for the area of rectangular grid drawings of planar graphs, see Rahman et al.~\cite{Rahman1998}.
		
		However, we decided that if the crossover is possible, it would be correct to consider a more general construction.

		\subsection*{An example of constructing the entire puzzle board}
		
		An example of constructing the \textsc{Evolomino} board for the \textsc{3-SAT} instance $C_1 \wedge C_2$, where $C_1 = (x_1 \vee \bar{x}_2 \vee x_3)$ and $C_2 = (x_2 \vee \bar{x}_3 \vee \bar{x}_4)$ is illustrated in Fig.~\ref{Fig_full_board}. 
		
		The board size is $42 \times 21$ and includes:
		\begin{itemize}
			\item 4 variable and wire gadgets for $x_1$, $x_2$, $x_3$, and $x_4$;
			\item 2 clause gadgets for $C_1$ and $C_2$;
			\item 2 split gadgets for $x_2$ and $x_3$;
			\item 3 negation gadgets for $(x_2,C_1)$, $(x_3,C_2)$, and $(x_4,C_2)$;
			\item 1 crossover gadget for $(x_2,C_2)$ and $(x_3,C_1)$.
		\end{itemize}
		
		An example of a puzzle solution that satisfies all the clauses and corresponds to the truth assignment $x_1 = 0$, $x_2 = 1$, $x_3 = 1$, $x_4 = 0$ is shown in Fig.~\ref{Fig_full_board_solution}.
		
		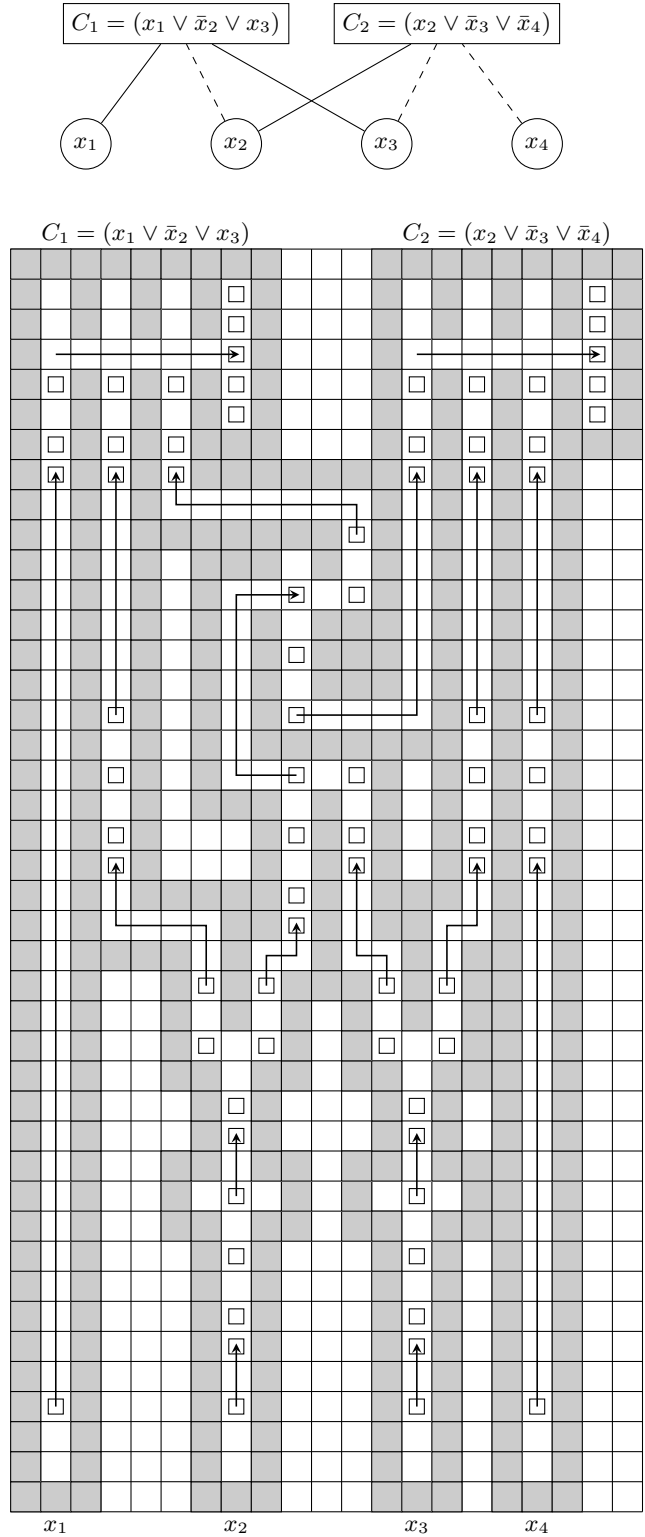
\begin{figure}[t]
			\centering
			\begin{tikzpicture}[scale=0.40]
				
				%	\foreach \x in {1,...,21}
				%	{
					%		\node at (\x,0) {\x};
					%		\node at (\x,43) {\x};
					%	}
				
				%	\foreach \y in {1,...,42}
				%	{
					%		\node at (0,\y) {\y};
					%		\node at (22,\y) {\y};
					%	}
				
				\foreach \x in {0,...,21}
				\draw (\x+0.5,0.5)--(\x+0.5,42.5);
				
				\foreach \x in {0,...,42}
				\draw (0.5,\x+0.5)--(21.5,\x+0.5);

				\foreach \x in {1}	
				\foreach \y in {1,...,42}
				\draw [fill = MidnightBlue!30] (\x-0.5,\y-0.5) -- (\x+0.5,\y-0.5) -- (\x+0.5,\y+0.5) -- (\x-0.5,\y+0.5) -- cycle;	
				
				\foreach \x in {2}	
				\foreach \y in {1,42}
				\draw [fill = MidnightBlue!30] (\x-0.5,\y-0.5) -- (\x+0.5,\y-0.5) -- (\x+0.5,\y+0.5) -- (\x-0.5,\y+0.5) -- cycle;	
				
				\foreach \x in {3}	
				\foreach \y in {1,...,38,40,41,42}
				\draw [fill = MidnightBlue!30] (\x-0.5,\y-0.5) -- (\x+0.5,\y-0.5) -- (\x+0.5,\y+0.5) -- (\x-0.5,\y+0.5) -- cycle;	
				
				\foreach \x in {4}	
				\foreach \y in {19,42}
				\draw [fill = MidnightBlue!30] (\x-0.5,\y-0.5) -- (\x+0.5,\y-0.5) -- (\x+0.5,\y+0.5) -- (\x-0.5,\y+0.5) -- cycle;	
				
				\foreach \x in {5}	
				\foreach \y in {19,21,22,...,38,40,41,42}
				\draw [fill = MidnightBlue!30] (\x-0.5,\y-0.5) -- (\x+0.5,\y-0.5) -- (\x+0.5,\y+0.5) -- (\x-0.5,\y+0.5) -- cycle;	
				
				\foreach \x in {6}	
				\foreach \y in {10,11,12,15,16,17,18,19,21,33,42}
				\draw [fill = MidnightBlue!30] (\x-0.5,\y-0.5) -- (\x+0.5,\y-0.5) -- (\x+0.5,\y+0.5) -- (\x-0.5,\y+0.5) -- cycle;	
				
				\foreach \x in {7}	
				\foreach \y in {1,...,10,12,13,14,15,21,24,25,...,33,35,36,37,38,40,41,42}
				\draw [fill = MidnightBlue!30] (\x-0.5,\y-0.5) -- (\x+0.5,\y-0.5) -- (\x+0.5,\y+0.5) -- (\x-0.5,\y+0.5) -- cycle;	
				
				\foreach \x in {8}	
				\foreach \y in {1,17,18,...,21,24,32,33,35,36,42}
				\draw [fill = MidnightBlue!30] (\x-0.5,\y-0.5) -- (\x+0.5,\y-0.5) -- (\x+0.5,\y+0.5) -- (\x-0.5,\y+0.5) -- cycle;	
				
				\foreach \x in {9}	
				\foreach \y in {1,...,10,12,13,14,15,20,21,22,23,24,26,27,28,29,30,32,33,35,36,...,42}
				\draw [fill = MidnightBlue!30] (\x-0.5,\y-0.5) -- (\x+0.5,\y-0.5) -- (\x+0.5,\y+0.5) -- (\x-0.5,\y+0.5) -- cycle;	
				
				\foreach \x in {10}	
				\foreach \y in {10,11,12,15,16,17,18,26,33,35}
				\draw [fill = MidnightBlue!30] (\x-0.5,\y-0.5) -- (\x+0.5,\y-0.5) -- (\x+0.5,\y+0.5) -- (\x-0.5,\y+0.5) -- cycle;	
				
				\foreach \x in {11}	
				\foreach \y in {18,...,24,26,28,29,30,32,33,35}
				\draw [fill = MidnightBlue!30] (\x-0.5,\y-0.5) -- (\x+0.5,\y-0.5) -- (\x+0.5,\y+0.5) -- (\x-0.5,\y+0.5) -- cycle;	
				
				\foreach \x in {12}	
				\foreach \y in {10,11,12,15,16,17,18,26,28,29,30,35}
				\draw [fill = MidnightBlue!30] (\x-0.5,\y-0.5) -- (\x+0.5,\y-0.5) -- (\x+0.5,\y+0.5) -- (\x-0.5,\y+0.5) -- cycle;	
				
				\foreach \x in {13}	
				\foreach \y in {1,...,10,12,13,14,15,20,21,...,26,28,29,...,42}
				\draw [fill = MidnightBlue!30] (\x-0.5,\y-0.5) -- (\x+0.5,\y-0.5) -- (\x+0.5,\y+0.5) -- (\x-0.5,\y+0.5) -- cycle;	
				
				\foreach \x in {14}	
				\foreach \y in {1,17,18,...,21,26,42}
				\draw [fill = MidnightBlue!30] (\x-0.5,\y-0.5) -- (\x+0.5,\y-0.5) -- (\x+0.5,\y+0.5) -- (\x-0.5,\y+0.5) -- cycle;	
				
				\foreach \x in {15}	
				\foreach \y in {1,...,10,12,13,14,15,21,22,...,38,40,41,42}
				\draw [fill = MidnightBlue!30] (\x-0.5,\y-0.5) -- (\x+0.5,\y-0.5) -- (\x+0.5,\y+0.5) -- (\x-0.5,\y+0.5) -- cycle;	
				
				\foreach \x in {16}	
				\foreach \y in {10,11,12,15,16,...,19,42}
				\draw [fill = MidnightBlue!30] (\x-0.5,\y-0.5) -- (\x+0.5,\y-0.5) -- (\x+0.5,\y+0.5) -- (\x-0.5,\y+0.5) -- cycle;	
				
				\foreach \x in {17}	
				\foreach \y in {1,...,38,40,41,42}
				\draw [fill = MidnightBlue!30] (\x-0.5,\y-0.5) -- (\x+0.5,\y-0.5) -- (\x+0.5,\y+0.5) -- (\x-0.5,\y+0.5) -- cycle;	
				
				\foreach \x in {18}	
				\foreach \y in {1,42}
				\draw [fill = MidnightBlue!30] (\x-0.5,\y-0.5) -- (\x+0.5,\y-0.5) -- (\x+0.5,\y+0.5) -- (\x-0.5,\y+0.5) -- cycle;	
				
				\foreach \x in {19}	
				\foreach \y in {1,...,38,40,41,42}
				\draw [fill = MidnightBlue!30] (\x-0.5,\y-0.5) -- (\x+0.5,\y-0.5) -- (\x+0.5,\y+0.5) -- (\x-0.5,\y+0.5) -- cycle;	
				
				\foreach \x in {20}	
				\foreach \y in {36,42}
				\draw [fill = MidnightBlue!30] (\x-0.5,\y-0.5) -- (\x+0.5,\y-0.5) -- (\x+0.5,\y+0.5) -- (\x-0.5,\y+0.5) -- cycle;	
				
				\foreach \x in {21}	
				\foreach \y in {36,...,42}
				\draw [fill = MidnightBlue!30] (\x-0.5,\y-0.5) -- (\x+0.5,\y-0.5) -- (\x+0.5,\y+0.5) -- (\x-0.5,\y+0.5) -- cycle;

				\foreach \x in {2}
				\foreach \y in {4,35,36,38}
				\draw (\x-0.25,\y-0.25) -- (\x+0.25,\y-0.25) -- (\x+0.25,\y+0.25) -- (\x-0.25,\y+0.25) -- cycle;
				
				\foreach \x in {4}
				\foreach \y in {22,23,25,27,35,36,38}
				\draw (\x-0.25,\y-0.25) -- (\x+0.25,\y-0.25) -- (\x+0.25,\y+0.25) -- (\x-0.25,\y+0.25) -- cycle;
				
				\foreach \x in {6}
				\foreach \y in {35,36,38}
				\draw (\x-0.25,\y-0.25) -- (\x+0.25,\y-0.25) -- (\x+0.25,\y+0.25) -- (\x-0.25,\y+0.25) -- cycle;
				
				\foreach \x in {7}
				\foreach \y in {16,18}
				\draw (\x-0.25,\y-0.25) -- (\x+0.25,\y-0.25) -- (\x+0.25,\y+0.25) -- (\x-0.25,\y+0.25) -- cycle;
				
				\foreach \x in {8}
				\foreach \y in {4,6,7,9,11,13,14,37,38,39,40,41}
				\draw (\x-0.25,\y-0.25) -- (\x+0.25,\y-0.25) -- (\x+0.25,\y+0.25) -- (\x-0.25,\y+0.25) -- cycle;
				
				\foreach \x in {9}
				\foreach \y in {16,18}
				\draw (\x-0.25,\y-0.25) -- (\x+0.25,\y-0.25) -- (\x+0.25,\y+0.25) -- (\x-0.25,\y+0.25) -- cycle;
				
				\foreach \x in {10}
				\foreach \y in {20,21,23,25,27,29,31}
				\draw (\x-0.25,\y-0.25) -- (\x+0.25,\y-0.25) -- (\x+0.25,\y+0.25) -- (\x-0.25,\y+0.25) -- cycle;
				
				\foreach \x in {12}
				\foreach \y in {22,23,25,31,33}
				\draw (\x-0.25,\y-0.25) -- (\x+0.25,\y-0.25) -- (\x+0.25,\y+0.25) -- (\x-0.25,\y+0.25) -- cycle;
				
				\foreach \x in {13}
				\foreach \y in {16,18}
				\draw (\x-0.25,\y-0.25) -- (\x+0.25,\y-0.25) -- (\x+0.25,\y+0.25) -- (\x-0.25,\y+0.25) -- cycle;
				
				\foreach \x in {14}
				\foreach \y in {4,6,7,9,11,13,14,35,36,38}
				\draw (\x-0.25,\y-0.25) -- (\x+0.25,\y-0.25) -- (\x+0.25,\y+0.25) -- (\x-0.25,\y+0.25) -- cycle;
				
				\foreach \x in {15}
				\foreach \y in {16,18}
				\draw (\x-0.25,\y-0.25) -- (\x+0.25,\y-0.25) -- (\x+0.25,\y+0.25) -- (\x-0.25,\y+0.25) -- cycle;
				
				\foreach \x in {16}
				\foreach \y in {22,23,25,27,35,36,38}
				\draw (\x-0.25,\y-0.25) -- (\x+0.25,\y-0.25) -- (\x+0.25,\y+0.25) -- (\x-0.25,\y+0.25) -- cycle;
				
				\foreach \x in {18}
				\foreach \y in {4,22,23,25,27,35,36,38}
				\draw (\x-0.25,\y-0.25) -- (\x+0.25,\y-0.25) -- (\x+0.25,\y+0.25) -- (\x-0.25,\y+0.25) -- cycle;
				
				\foreach \x in {20}
				\foreach \y in {37,38,39,40,41}
				\draw (\x-0.25,\y-0.25) -- (\x+0.25,\y-0.25) -- (\x+0.25,\y+0.25) -- (\x-0.25,\y+0.25) -- cycle;

				\foreach \x in {8,14}
				{
					\draw [semithick,->,>=stealth] (\x,4) -- (\x,6+0.1);
					\draw [semithick,->,>=stealth] (\x,11) -- (\x,13+0.1);			
				}

				\draw [semithick,->,>=stealth] (10,25) -- (8,25) -- (8,31) -- (10+0.1,31);
				
				\draw [semithick,->,>=stealth] (9,18) -- (9,19) -- (10,19) -- (10,20+0.1);
				\draw [semithick,->,>=stealth] (13,18) -- (13,19) -- (12,19) -- (12,22+0.1);
				
				\draw [semithick,->,>=stealth] (12,33) -- (12,34) -- (6,34) -- (6,35+0.1);
				
				\draw [semithick,->,>=stealth] (10,27) -- (14,27) -- (14,35+0.1);
				
				\draw [semithick,->,>=stealth] (7,18) -- (7,20) -- (4,20) -- (4,22+0.1);
				\draw [semithick,->,>=stealth] (15,18) -- (15,20) -- (16,20) -- (16,22+0.1);
				
				\foreach \x in {4,16,18}
				\draw [semithick,->,>=stealth] (\x,27) -- (\x,35+0.1);
				
				\draw [semithick,->,>=stealth] (2,4) -- (2,35+0.1);
				\draw [semithick,->,>=stealth] (18,4) -- (18,22+0.1);

				\draw [semithick,->,>=stealth] (2,39) -- (8+0.1,39);
				\draw [semithick,->,>=stealth] (14,39) -- (20+0.1,39);

				\node at (2,0) {$x_1$};
				\node at (8,0) {$x_2$};
				\node at (14,0) {$x_3$};
				\node at (18,0) {$x_4$};
				
				\node at (5,43) {$(x_1 \vee \bar{x}_2 \vee x_3)$};
				
				\node at (17,43) {$(x_2 \vee \bar{x}_3 \vee \bar{x}_4)$};

				\node (C1) [draw,rectangle] at (6,50) {$C_1=x_1 \vee \bar{x}_2 \vee x_3$};
				\node (C2) [draw,rectangle] at (15,50) {$C_2=x_2 \vee \bar{x}_3 \vee \bar{x}_4$};
				
				\node (x1) [draw,circle] at (3,46) {$x_1$};
				\node (x2) [draw,circle] at (8,46) {$x_2$};
				\node (x3) [draw,circle] at (13,46) {$x_3$};
				\node (x4) [draw,circle] at (18,46) {$x_4$};
				
				\draw (x1) -- (C1);
				\draw [dashed] (x2) -- (C1);
				\draw (x3) -- (C1);
				\draw (x2) -- (C2);
				\draw [dashed] (x3) -- (C2);
				\draw [dashed] (x4) -- (C2);
				
			\end{tikzpicture}
			\caption {An example of constructing an \textsc{Evolomino} puzzle for the CNF formula\\ $(x_1 \vee \bar{x}_2 \vee x_3) \wedge (x_2 \vee \bar{x}_3 \vee \bar{x}_4)$}
			\label {Fig_full_board}
		\end{figure}

		\begin{figure}[t]
			\centering
			\begin{tikzpicture}[scale=0.40]
				
				%	\foreach \x in {1,...,21}
				%	{
					%		\node at (\x,0) {\x};
					%		\node at (\x,43) {\x};
					%	}
				
				%	\foreach \y in {1,...,42}
				%	{
					%		\node at (0,\y) {\y};
					%		\node at (22,\y) {\y};
					%	}
				
				\foreach \x in {0,...,21}
				\draw (\x+0.5,0.5)--(\x+0.5,42.5);
				
				\foreach \x in {0,...,42}
				\draw (0.5,\x+0.5)--(21.5,\x+0.5);

				\foreach \x in {1}	
				\foreach \y in {1,...,42}
				\draw [fill = MidnightBlue!30] (\x-0.5,\y-0.5) -- (\x+0.5,\y-0.5) -- (\x+0.5,\y+0.5) -- (\x-0.5,\y+0.5) -- cycle;	
				
				\foreach \x in {2}	
				\foreach \y in {1,42}
				\draw [fill = MidnightBlue!30] (\x-0.5,\y-0.5) -- (\x+0.5,\y-0.5) -- (\x+0.5,\y+0.5) -- (\x-0.5,\y+0.5) -- cycle;	
				
				\foreach \x in {3}	
				\foreach \y in {1,...,38,40,41,42}
				\draw [fill = MidnightBlue!30] (\x-0.5,\y-0.5) -- (\x+0.5,\y-0.5) -- (\x+0.5,\y+0.5) -- (\x-0.5,\y+0.5) -- cycle;	
				
				\foreach \x in {4}	
				\foreach \y in {19,42}
				\draw [fill = MidnightBlue!30] (\x-0.5,\y-0.5) -- (\x+0.5,\y-0.5) -- (\x+0.5,\y+0.5) -- (\x-0.5,\y+0.5) -- cycle;	
				
				\foreach \x in {5}	
				\foreach \y in {19,21,22,...,38,40,41,42}
				\draw [fill = MidnightBlue!30] (\x-0.5,\y-0.5) -- (\x+0.5,\y-0.5) -- (\x+0.5,\y+0.5) -- (\x-0.5,\y+0.5) -- cycle;	
				
				\foreach \x in {6}	
				\foreach \y in {10,11,12,15,16,17,18,19,21,33,42}
				\draw [fill = MidnightBlue!30] (\x-0.5,\y-0.5) -- (\x+0.5,\y-0.5) -- (\x+0.5,\y+0.5) -- (\x-0.5,\y+0.5) -- cycle;	
				
				\foreach \x in {7}	
				\foreach \y in {1,...,10,12,13,14,15,21,24,25,...,33,35,36,37,38,40,41,42}
				\draw [fill = MidnightBlue!30] (\x-0.5,\y-0.5) -- (\x+0.5,\y-0.5) -- (\x+0.5,\y+0.5) -- (\x-0.5,\y+0.5) -- cycle;	
				
				\foreach \x in {8}	
				\foreach \y in {1,17,18,...,21,24,32,33,35,36,42}
				\draw [fill = MidnightBlue!30] (\x-0.5,\y-0.5) -- (\x+0.5,\y-0.5) -- (\x+0.5,\y+0.5) -- (\x-0.5,\y+0.5) -- cycle;	
				
				\foreach \x in {9}	
				\foreach \y in {1,...,10,12,13,14,15,20,21,22,23,24,26,27,28,29,30,32,33,35,36,...,42}
				\draw [fill = MidnightBlue!30] (\x-0.5,\y-0.5) -- (\x+0.5,\y-0.5) -- (\x+0.5,\y+0.5) -- (\x-0.5,\y+0.5) -- cycle;	
				
				\foreach \x in {10}	
				\foreach \y in {10,11,12,15,16,17,18,26,33,35}
				\draw [fill = MidnightBlue!30] (\x-0.5,\y-0.5) -- (\x+0.5,\y-0.5) -- (\x+0.5,\y+0.5) -- (\x-0.5,\y+0.5) -- cycle;	
				
				\foreach \x in {11}	
				\foreach \y in {18,...,24,26,28,29,30,32,33,35}
				\draw [fill = MidnightBlue!30] (\x-0.5,\y-0.5) -- (\x+0.5,\y-0.5) -- (\x+0.5,\y+0.5) -- (\x-0.5,\y+0.5) -- cycle;	
				
				\foreach \x in {12}	
				\foreach \y in {10,11,12,15,16,17,18,26,28,29,30,35}
				\draw [fill = MidnightBlue!30] (\x-0.5,\y-0.5) -- (\x+0.5,\y-0.5) -- (\x+0.5,\y+0.5) -- (\x-0.5,\y+0.5) -- cycle;	
				
				\foreach \x in {13}	
				\foreach \y in {1,...,10,12,13,14,15,20,21,...,26,28,29,...,42}
				\draw [fill = MidnightBlue!30] (\x-0.5,\y-0.5) -- (\x+0.5,\y-0.5) -- (\x+0.5,\y+0.5) -- (\x-0.5,\y+0.5) -- cycle;	
				
				\foreach \x in {14}	
				\foreach \y in {1,17,18,...,21,26,42}
				\draw [fill = MidnightBlue!30] (\x-0.5,\y-0.5) -- (\x+0.5,\y-0.5) -- (\x+0.5,\y+0.5) -- (\x-0.5,\y+0.5) -- cycle;	
				
				\foreach \x in {15}	
				\foreach \y in {1,...,10,12,13,14,15,21,22,...,38,40,41,42}
				\draw [fill = MidnightBlue!30] (\x-0.5,\y-0.5) -- (\x+0.5,\y-0.5) -- (\x+0.5,\y+0.5) -- (\x-0.5,\y+0.5) -- cycle;	
				
				\foreach \x in {16}	
				\foreach \y in {10,11,12,15,16,...,19,42}
				\draw [fill = MidnightBlue!30] (\x-0.5,\y-0.5) -- (\x+0.5,\y-0.5) -- (\x+0.5,\y+0.5) -- (\x-0.5,\y+0.5) -- cycle;	
				
				\foreach \x in {17}	
				\foreach \y in {1,...,38,40,41,42}
				\draw [fill = MidnightBlue!30] (\x-0.5,\y-0.5) -- (\x+0.5,\y-0.5) -- (\x+0.5,\y+0.5) -- (\x-0.5,\y+0.5) -- cycle;	
				
				\foreach \x in {18}	
				\foreach \y in {1,42}
				\draw [fill = MidnightBlue!30] (\x-0.5,\y-0.5) -- (\x+0.5,\y-0.5) -- (\x+0.5,\y+0.5) -- (\x-0.5,\y+0.5) -- cycle;	
				
				\foreach \x in {19}	
				\foreach \y in {1,...,38,40,41,42}
				\draw [fill = MidnightBlue!30] (\x-0.5,\y-0.5) -- (\x+0.5,\y-0.5) -- (\x+0.5,\y+0.5) -- (\x-0.5,\y+0.5) -- cycle;	
				
				\foreach \x in {20}	
				\foreach \y in {36,42}
				\draw [fill = MidnightBlue!30] (\x-0.5,\y-0.5) -- (\x+0.5,\y-0.5) -- (\x+0.5,\y+0.5) -- (\x-0.5,\y+0.5) -- cycle;	
				
				\foreach \x in {21}	
				\foreach \y in {36,...,42}
				\draw [fill = MidnightBlue!30] (\x-0.5,\y-0.5) -- (\x+0.5,\y-0.5) -- (\x+0.5,\y+0.5) -- (\x-0.5,\y+0.5) -- cycle;

				\foreach \x in {2}
				\foreach \y in {2,3,4,35,36,37,38}
				\draw (\x-0.25,\y-0.25) -- (\x+0.25,\y-0.25) -- (\x+0.25,\y+0.25) -- (\x-0.25,\y+0.25) -- cycle;
				
				\foreach \x in {4}
				\foreach \y in {22,23,25,26,27,35,36,37,38}
				\draw (\x-0.25,\y-0.25) -- (\x+0.25,\y-0.25) -- (\x+0.25,\y+0.25) -- (\x-0.25,\y+0.25) -- cycle;
				
				\foreach \x in {6}
				\foreach \y in {35,36,38,39,40,41}
				\draw (\x-0.25,\y-0.25) -- (\x+0.25,\y-0.25) -- (\x+0.25,\y+0.25) -- (\x-0.25,\y+0.25) -- cycle;
				
				\foreach \x in {7}
				\foreach \y in {11,16,18}
				\draw (\x-0.25,\y-0.25) -- (\x+0.25,\y-0.25) -- (\x+0.25,\y+0.25) -- (\x-0.25,\y+0.25) -- cycle;
				
				\foreach \x in {8}
				\foreach \y in {4,6,7,9,10,11,13,14,15,16,37,38,39,40,41}
				\draw (\x-0.25,\y-0.25) -- (\x+0.25,\y-0.25) -- (\x+0.25,\y+0.25) -- (\x-0.25,\y+0.25) -- cycle;
				
				\foreach \x in {9}
				\foreach \y in {11,16,18}
				\draw (\x-0.25,\y-0.25) -- (\x+0.25,\y-0.25) -- (\x+0.25,\y+0.25) -- (\x-0.25,\y+0.25) -- cycle;
				
				\foreach \x in {10}
				\foreach \y in {20,21,23,24,25,27,29,30,31,32}
				\draw (\x-0.25,\y-0.25) -- (\x+0.25,\y-0.25) -- (\x+0.25,\y+0.25) -- (\x-0.25,\y+0.25) -- cycle;
				
				\foreach \x in {11}
				\foreach \y in {25,31}
				\draw (\x-0.25,\y-0.25) -- (\x+0.25,\y-0.25) -- (\x+0.25,\y+0.25) -- (\x-0.25,\y+0.25) -- cycle;
				
				\foreach \x in {12}
				\foreach \y in {22,23,25,31,33}
				\draw (\x-0.25,\y-0.25) -- (\x+0.25,\y-0.25) -- (\x+0.25,\y+0.25) -- (\x-0.25,\y+0.25) -- cycle;
				
				\foreach \x in {13}
				\foreach \y in {11,16,18}
				\draw (\x-0.25,\y-0.25) -- (\x+0.25,\y-0.25) -- (\x+0.25,\y+0.25) -- (\x-0.25,\y+0.25) -- cycle;
				
				\foreach \x in {14}
				\foreach \y in {4,6,7,9,10,11,13,14,15,16,35,36,38,39,40}
				\draw (\x-0.25,\y-0.25) -- (\x+0.25,\y-0.25) -- (\x+0.25,\y+0.25) -- (\x-0.25,\y+0.25) -- cycle;
				
				\foreach \x in {15}
				\foreach \y in {11,16,18}
				\draw (\x-0.25,\y-0.25) -- (\x+0.25,\y-0.25) -- (\x+0.25,\y+0.25) -- (\x-0.25,\y+0.25) -- cycle;
				
				\foreach \x in {16}
				\foreach \y in {22,23,25,26,27,35,36,37,38}
				\draw (\x-0.25,\y-0.25) -- (\x+0.25,\y-0.25) -- (\x+0.25,\y+0.25) -- (\x-0.25,\y+0.25) -- cycle;
				
				\foreach \x in {18}
				\foreach \y in {2,3,4,22,23,24,25,27,35,36,38,39,40,41}
				\draw (\x-0.25,\y-0.25) -- (\x+0.25,\y-0.25) -- (\x+0.25,\y+0.25) -- (\x-0.25,\y+0.25) -- cycle;
				
				\foreach \x in {20}
				\foreach \y in {37,38,39,40,41}
				\draw (\x-0.25,\y-0.25) -- (\x+0.25,\y-0.25) -- (\x+0.25,\y+0.25) -- (\x-0.25,\y+0.25) -- cycle;

				\foreach \x in {8,14}
				{
					\draw [semithick,->,>=stealth] (\x,4) -- (\x,6+0.1);
					\draw [semithick,->,>=stealth] (\x,11) -- (\x,13+0.1);			
				}

				\draw [semithick,->,>=stealth] (10,25) -- (8,25) -- (8,31) -- (10+0.1,31);
				
				\draw [semithick,->,>=stealth] (9,18) -- (9,19) -- (10,19) -- (10,20+0.1);
				\draw [semithick,->,>=stealth] (13,18) -- (13,19) -- (12,19) -- (12,22+0.1);
				
				\draw [semithick,->,>=stealth] (12,33) -- (12,34) -- (6,34) -- (6,35+0.1);
				
				\draw [semithick,->,>=stealth] (10,27) -- (14,27) -- (14,35+0.1);
				
				\draw [semithick,->,>=stealth] (7,18) -- (7,20) -- (4,20) -- (4,22+0.1);
				\draw [semithick,->,>=stealth] (15,18) -- (15,20) -- (16,20) -- (16,22+0.1);
				
				\foreach \x in {4,16,18}
				\draw [semithick,->,>=stealth] (\x,27) -- (\x,35+0.1);
				
				\draw [semithick,->,>=stealth] (2,4) -- (2,35+0.1);
				\draw [semithick,->,>=stealth] (18,4) -- (18,22+0.1);

				\draw [semithick,->,>=stealth] (2,39) -- (8+0.1,39);
				\draw [semithick,->,>=stealth] (14,39) -- (20+0.1,39);

				\node at (2,0) {\bf{0}};
				\node at (8,0) {\bf{1}};
				\node at (14,0) {\bf{1}};
				\node at (18,0) {\bf{0}};
				
				\node at (5,43) {$(\mathbf{0} \vee \bar{\mathbf{1}} \vee \mathbf{1})$};
				\node at (17,43) {$(\mathbf{1} \vee \bar{\mathbf{1}} \vee \bar{\mathbf{0}})$};

			\end{tikzpicture}
			\caption {Solution to the puzzle corresponding to the CNF formula $(x_1 \vee \bar{x}_2 \vee x_3)$\\ $\wedge (x_2 \vee \bar{x}_3 \vee \bar{x}_4)$ under the truth assignment $x_1 = 0$, $x_2=1$, $x_3=1$, $x_4=0$}
			\label {Fig_full_board_solution}
		\end{figure}

		\subsection*{Completing the proof}
		
		It remains to note that all gadgets, except for the clause gadgets, always admit a valid solution, while the clause gadget is solvable if and only if at least one of the literals is true.
		Thus, the puzzle has a solution if and only if the corresponding instance of the \textsc{3-SAT} problem is satisfiable.
		This makes the \textsc{Evolomino} problem NP-hard, and since, by Lemma~\ref{Lemma_Evolomino_in_NP}, it also belongs to the class NP, we conclude that \textsc{Evolomino} is NP-complete.	\qed
	\end{proof}
	
	Moreover, once we fix a truth assignment in an instance of \textsc{3-SAT}, the filling pattern of the resulting instance of \textsc{Evolomino} is uniquely determined. Thus, our reduction is parsimonious, i.e., it establishes a bijection between the solution sets of the two problems. Consequently, the number of satisfying truth assignments to the original CNF formula equals the number of solutions to the resulting \textsc{Evolomino} puzzle. Since the \textsc{Counting 3-SAT} is $\texttt{\#}$P-complete~\cite{Valiant1979}, we obtain the following corollary.  
	
	\begin{corollary}
		\textsc{Counting Evolomino} is $\texttt{\#}$P-complete.
	\end{corollary}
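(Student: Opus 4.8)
The plan is to prove the two halves of \texttt{\#}P-completeness separately: membership in \texttt{\#}P, and \texttt{\#}P-hardness obtained by showing that the reduction of the preceding NP-hardness theorem is parsimonious, so that it reduces the \texttt{\#}P-complete problem \textsc{Counting 3-SAT}~\cite{Valiant1979} to \textsc{Counting Evolomino}.

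For membership I would reuse the verifier of Lemma~\ref{Lemma_Evolomino_in_NP}. A solution is a map from the $O(pq)$ white cells to $\{\emptyset,\square\}$, hence of polynomial size, and its correctness is checked in time $O(p^3q^3)$. A nondeterministic machine that guesses such a filling and accepts exactly when all rules hold therefore has precisely as many accepting computations as the puzzle has solutions, which places \textsc{Counting Evolomino} in \texttt{\#}P. For hardness I would take the reduction of the NP-hardness theorem unchanged and prove that the map sending a truth assignment to its induced board filling is a bijection onto the set of solutions. The forward direction is the determinism already noted after the theorem: fixing the truth values, I would check gadget by gadget that the filling is forced --- the variable/wire gadget admits only the two fillings of Fig.~\ref{Fig_variable_wire_gadget}, the negation and split gadgets one filling per input value (Figs.~\ref{Fig_negation_gadget}, \ref{Fig_split_gadget}), and the crossover gadget exactly the four fillings of Fig.~\ref{Fig_crossover_gadget_cases}, one per input pair. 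For the backward direction I would read a truth assignment off the variable gadgets of an arbitrary solution, observe that solvability of each clause gadget forces at least one true literal in every clause (so the assignment satisfies the formula), and then invoke forward uniqueness to identify that solution with the one the assignment induces. This yields a bijection, so the solution counts of the two instances coincide and \textsc{Counting Evolomino} is \texttt{\#}P-hard; together with membership this gives \texttt{\#}P-completeness.

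The step I expect to be the crux is establishing rigidity of the clause gadget when a clause has two or three true literals, where naively the satisfying block could seem placeable in several open columns. The resolution is that the pre-drawn square in the middle of each literal column can neither be left isolated (every block must meet an arrow) nor be joined downward (that would merge it with the block carrying the literal arrow, producing a block with two arrow-squares). Hence each \emph{true} literal column is forced to grow its block upward until it meets the horizontal arrow, while each false column is blocked from below. The requirement that consecutive blocks along the horizontal arrow increase by exactly one square and terminate at the fixed five-square block then pins the size of every literal block from its left-to-right position, so the filling is unique once the set of true literals is fixed. Verifying this rigidity, and checking that the shaded fencing of the wires leaves no additional freedom, is the main work; once it is in hand, the corollary is immediate.
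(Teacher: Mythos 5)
Your overall route is the same as the paper's: membership in \texttt{\#}P via the polynomial-time verifier of Lemma~\ref{Lemma_Evolomino_in_NP}, and hardness by observing that the \textsc{3-SAT} reduction is parsimonious, i.e., that truth assignments are in bijection with puzzle solutions (the paper states this in one sentence, invoking Valiant's result~\cite{Valiant1979}; your backward direction, reading an assignment off the variable gadgets and using forward uniqueness, is the standard way to make that bijection explicit). So there is no divergence in approach, only in the level of detail you attempt.

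However, the justification you give for the step you yourself call the crux is wrong as stated. You claim the pre-drawn square in the middle of a literal column ``can neither be left isolated nor be joined downward,'' the latter because joining downward ``would merge it with the block carrying the literal arrow, producing a block with two arrow-squares.'' That merged block (rows 2--5 of the column in Fig.~\ref{Fig_clause_gadget}) contains exactly \emph{one} arrow square, namely the cell at the end of the vertical literal arrow, so no rule about arrow squares is violated; indeed, for a \emph{false} literal this downward join is precisely the unique valid filling (Fig.~\ref{Fig_clause_gadget_feasible}). The correct obstruction is the growth rule along the vertical literal arrow: for a true literal the block at the arrow's start has one square, so the terminal block must have exactly two (the two pre-drawn cells), forcing the cell below the middle square to stay empty; the middle square then has no arrow cell reachable except the horizontal-arrow cell above it, so it must grow upward. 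Symmetrically, for a false literal the terminal block must have exactly four cells, occupying rows 2--5, and extending it to the horizontal arrow would create a five-cell block containing two arrow squares --- this is where your ``two arrow-squares'' argument actually applies. With that repair (plus your correct observation that the size progression ending in the fixed five-square block pins each true column's block size by its left-to-right position, and that the in-between columns cannot host extra blocks), the rigidity claim and hence the parsimony of the reduction go through as in the paper.
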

	
	Let us recall that the complexity class $\texttt{\#}$P, introduced by Valiant~\cite{Valiant1979}, contains the counting versions of NP decision problems. For more information on $\texttt{\#}$P-complete problems and parsimonious reduction, see the book by Goldreich~\cite{Goldreich2008}.

	\section{Conclusion}
	
	In this paper, we have proved, via a reduction from \textsc{3-SAT}, that the \textsc{Evolomino} puzzle is NP-complete. Furthermore, since our reduction is parsimonious and preserves the number of distinct solutions, the corresponding counting problem, \textsc{Counting Evolomino}, is also $\texttt{\#}$P-complete.
	
	Evolomino is a relatively new puzzle for which little is known from a theoretical computer science perspective. Several directions for future research appear promising. On the one hand, the development of algorithms for the general NP-hard case of \textsc{Evolomino}, such as integer linear programming models or backtracking approaches.
	On the other hand, the NP-completeness result implies only that some instances, in particular those that correspond to the \textsc{3-SAT} problem, are computationally intractable (assuming P $\neq$ NP). It would be worthwhile to investigate additional constraints or special cases under which subproblems of \textsc{Evolomino} become polynomially solvable.
	
	\begin{credits}
		\subsubsection{\ackname} 
		The research is supported by the P.G. Demidov Yaroslavl State University Project VIP-016.
		
		\subsubsection{\discintname}
		The author has no competing interests to declare that are relevant to the content of this article.
	\end{credits}
	%
	% ---- Bibliography ----
	%
	% BibTeX users should specify bibliography style 'splncs04'.
	% References will then be sorted and formatted in the correct style.
	%
	\bibliographystyle{splncs04}
	\bibliography{Nikolaev_arXiv_2025}

\end{document}